\newtheorem{theorem}{\sc Theorem}[section]
\newtheorem{proposition}[theorem]{\sc Proposition}
\newtheorem{prop}[theorem]{\sc Proposition}
\numberwithin{equation}{section}
\theoremstyle{remark}
\newtheorem{remark}[theorem]{Remark}
\newtheorem{example}[theorem]{Example}
\newcommand{\be}{\begin{equation}}
\newcommand{\ee}{\end{equation}}
\providecommand{\abs}[1]{\vert#1\vert}
\def\bN{\mathbb{N}}
\def\bR{\mathbb{R}}
\def\bZ{\mathbb{Z}}
 \def\Z{\bZ} 
\def\R{\bR}
\def\N{\bN}
\definecolor{darkgreen}{rgb}{0.0,0.5,0.0}
\definecolor{darkblue}{rgb}{0.0,0.0,0.3}
\definecolor{nicosred}{rgb}{0.65,0.1,0.1}
\definecolor{light-gray}{gray}{0.7}
\title{Decoding how higher-order network interactions shape complex contagion dynamics}
\author{Istv\'an Z. Kiss$^{1,*}$, Christian Bick$^{2, 3, 4,5}$ \& P\'eter L. Simon$^{6,7}$}
\date{
$^1$Network Science Institute, Northeastern University London, London, United Kingdom\\
$^2$Department of Mathematics, Vrije Universiteit Amsterdam, Amsterdam, the Netherlands\\
$^3$Institute for Advanced Study, Technical University of Munich, Garching, Germany \\
$^4$Department of Mathematics, University of Exeter, Exeter, United Kingdom\\
$^5$Mathematical Institute, University of Oxford, Oxford, United Kingdom\\
$^6$Institute of Mathematics, E\"otv\"os Lor\'and University, Budapest, Hungary\\
$^7$Alfr\'ed R\'enyi Institute of Mathematics, HUN-REN, Budapest, Hungary\\
	  \vspace{1cm}\today
	  }
\begin{document}
\maketitle	

\begin{abstract} 
Complex contagion models that involve contagion along higher-order structures, such as simplicial complexes and hypergraphs, yield new classes of mean-field models. 
Interestingly, the differential equations arising from many such models often exhibit a similar form, resulting in qualitatively comparable global bifurcation patterns. Motivated by this observation, we investigate a generalized mean-field-type model that provides a unified framework for analysing a range of different models.
In particular, we derive analytical conditions for the emergence of different bifurcation regimes exhibited by three models of increasing complexity---ranging from three- and four-body interactions to two connected populations with both pairwise and three-body interactions.
For the first two cases, we give a complete characterisation of all possible outcomes, along with the corresponding conditions on network and epidemic parameters. 
In the third case, we demonstrate that multistability is possible despite only three-body interactions. 
Our results reveal that single population models with three-body interactions can only exhibit simple transcritical transitions or bistability, whereas with four-body interactions multistability with two distinct endemic steady states is possible. 
Surprisingly, the two-population model exhibits multistability via symmetry breaking despite three-body interactions only. 
Our work sheds light on the relationship between equation structure and model behaviour and makes the first step towards elucidating mechanisms by which different system behaviours arise, and how network and dynamic properties facilitate or hinder outcomes.
\end{abstract}



\newpage

\section{Introduction}
The development and analysis of contagion models on higher-order networks over the last few years opened up new directions of research to understand spreading processes on networks; cf.~\cite{battiston2020networks,Bick2021,ferraz2024contagion}.
Classical epidemic models on networks where contagion acts along an edge~$\{i,j\}$ between nodes~$i$ and~$j$.
By contrast, the probability of a susceptible node~$i$ becoming infected in a network with higher-order interactions depends on the order~$\ell$ of a simplex/hyperedge~$\{i, j_1,\dotsc j_{\ell-1}\}$ that represents a $\ell$-body interaction involving~$i$.
For example in case of simplicial complexes---as illustrated in Figure~\ref{fig:schematic_simplicial_infection}---the rate of infection of susceptible node~$i$ is significantly increased by routes of infection via 2- and 3-simplices.
Note that complex contagion induced by simplices/hyperedges is in addition to the pairwise contagion dynamics. 
While three distinct nodes~$i$,~$j$ and~$k$ may be connected in a triangle of pairwise links, the existence of~$\{i,j,k\}$ indicates complex contagion. 
If $\{i,j,k\}$ is part of a simplicial complex (i.e., it satisfies the closure relation that $\{i,j\}$, $\{j,k\}$, $\{i,k\}$ are also present), the complex contagion implies an additional infection pressure onto a susceptible node whose neighbours are infected, see Figure~\ref{fig:schematic_simplicial_infection}. 
More generally for hypergraphs, each hyperedge exerts infection pressure that typically takes the form~$\nu I^{\alpha}$ with~$I$ being the number of infected nodes in the hyperdege and parameters~$\alpha,\nu$~\cite{st-onge_social_2021,st-onge_influential_2022,de_arruda_social_2020}. 
Note that this allows for any number of infected nodes within the hyperedge in contrast to simplicial contagion where all nodes in the simplex apart from the susceptible node have to be infectious for the excess contagion to be activated.

Complex contagion through higher-order interactions yields dynamical phenomena one may not expect if contagion is only pairwise. 
In contrast to traditional pairwise contagion, which is typically characterized by the disease free state losing stability to an endemic state in a forward transcritical bifurcation, contagion on higher-order networks can show hysteresis/bi-stability.
For example, the disease-free and an endemic equilibrium may co-exist or there could be multistability between two endemic equilibria (strictly positive steady states).
For example in~\cite{iacopini2019simplicial}, the authors show analytically that the transition is discontinuous and that a bistable region appears where healthy and endemic states co-exist.
Similarly, in~\cite{ghosh2023dimension}, the authors construct a one-dimensional model starting from an ensemble model based on the degree of the nodes. 
After a number of averaging assumption, their one dimensional mean-field model also displays discontinuous transition and bi-stability. 
Even more complex behaviour is observed for contagion dynamics on hypergraphs. In~\cite{ferraz2023multistability}, the authors show that their model has a vast parameter space, with first- and second-order transitions, bistability, and hysteresis. Finally, the authors of~\cite{kiss_insights_2023} show rigorously that the limiting mean field model of an exact complex contagion model with arbitrary simplicial complexes on a fully connected network reduces to a single differential equation which displays bistability and multi-stability with two strictly positive endemic steady state co-existing.

\begin{figure}
     \centering
     \includegraphics[scale=0.6]{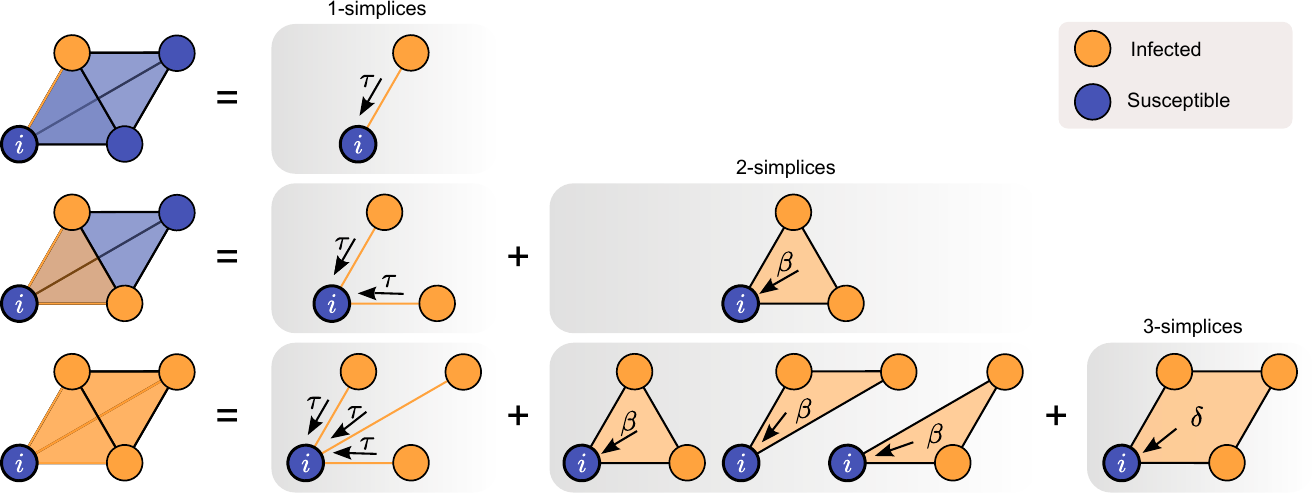}
     \caption{
     Illustration of the different possible routes of infection when higher-order interactions are modelled as simplicial complexes. The susceptible~(S) ~$i$ is part of a 3-simplex (left column); being part of all resulting lower-order simplices. The infection pressure that~$i$ receives grows with the number of infectious nodes~(I) within the
considered simplex. 
Top row: One node is infectious, thus~$i$ can be infected by a single 1-simplex (edge) with rate $\tau > 0$. 
Middle row: Two nodes are infectious, thus~$i$ can be infected either by the two 1-simplices or by the ``infectious'' 2-simplex (triangle) with rate $\beta> 0$. 
Bottom row: All three nodes are infectious, thus~$i$ can be infected either by the three 1-simplices, by the three 2-simplices, or by the 3-simplex (tetrahedron) with rate $\delta>0$.
}
\label{fig:schematic_simplicial_infection}
\end{figure}

{From a mathematical viewpoint, modelling of complex contagion on general structures can be done either (a) considering the entire system at once, referred to as top-down, or (b) starting at the node-level and building up to pairs, triples and so on, referred to as bottom-up~\cite{kiss2017mathematics}.}
For the top-down approach, the state of microscopic dynamics of susceptible/infected/susceptible (SIS) contagion model at the individual level is given by an $N$-dimensional vector of zeros and ones~\cite{bodo2016sis, kiss_insights_2023}, leading to a Markov chain on a state space with $2^N$~elements. 
Another widely used approach is the bottom-up approach where one starts with evolution equations for the states of the nodes. These of course depend on the joint probabilities of pairs, the states of the pairs on the other hand depend on the states of the triples and so on. This induces a hierarchy of dependencies with the number of required equations becoming difficult to manage and analyse. In a desire to break these dependencies, higher-order moments are approximated by node-level quantities  leading to so-called individual-based mean-field models~\cite{kiss2017mathematics, van2008virus}.
To allow for analytical and/or numerical treatment, both the exact bottom-up and top-down models require closures. However, the bottom-up model has some advantages in that the contact structure is evident in the equations and the order at which closures are performed and the closure relationships themselves are more easily accessible. In general, closures lead to low dimensional systems of differential equations where bifurcation analysis can be employed to reveal possible behaviours and their corresponding parameter ranges.

When simplified further to allow for an explicit analysis---often to a single differential equation assuming homogeneity---many of these models show qualitatively similar behaviour at the system level.
Indeed, the resulting equations are given by polynomial vector field whose polynomial order relates to the interaction order and exhibit comparable global bifurcation structure.
We give three concrete examples that describe the evolution of the fraction of infected individuals~$y(t)$.
In~\cite{iacopini2019simplicial}, the authors proposed the following model
\begin{align}
  \textbf{Model A:}\qquad  \dot{y}(t)&=-\mu y(t)+\sum_{\omega=1}^L\beta_\omega\langle k_\omega\rangle y^\omega(t)[1-y(t)],
    \label{eq:ex_iacopo}
\intertext{
where~$\mu$ is the rate of recovery, $\beta_{\omega}$ is simplicial-order specific rate of infection, and $\langle k_\omega\rangle$ is the average number of order~$\omega$ simplices a typical node is part of. Starting from the level of nodes and taking into account degrees, the authors of~\cite{ghosh_dimension_2023} use a series of approximations to arrive at}
   \textbf{Model B:}\qquad \dot{y}(t)&=-y(t)+\lambda\beta_{\mathrm{eff}}(1-y(t))y(t)+\gamma\beta_{\mathrm{eff}}(1-y(t))y^{2}(t),
     \label{eq:ex_degree}
\intertext{
which can be seen as the model above with $\omega=2$.
Finally, in~\cite{kiss_insights_2023}, the authors start from a fully connected network and model contagion over simplices of arbitrary order to obtain 
}
     \textbf{Model C:}\qquad\dot{y}(t)&=-\gamma y(t)+\sum_{i=1}^{M}\frac{s_{i}}{i!}\left(1-y(t)\right)y^{i}(t).
      \label{eq:ex_kiss_insights}
\end{align}
From the models above, it is evident that all  right hand sides are polynomials where the coefficients encode the underlying structure and the complex contagion mechanism.

Motivated by these observations, in this paper, we provide general classification results for the possible dynamical behavior of a wide class of complex contagion dynamics.
Specifically, we focus on a class of vector fields that encompass many different models including Models~A,~B, and~C above.
This approach allows us to elucidate commonality between models and to map out explicitly the dependence between equation structure and model outcome.
We highlight how model behaviour depends on contact network parameters including pairwise and higher-order interactions that contribute to complex contagion dynamics.
We start from the simplest possible one-population model, in the sense that all nodes are topologically equivalent (meaning that on average they partake in the same number of links, 2-simplices, 3-simplices, hyperedges of various sizes and so on).
This allows us to give a full mathematical characterisation of possible outcomes and outline whether critical bifurcation points depend on higher-order structure.
We then move to two-population models where nodes are of two types; that is in terms of how they connect within their own population and with the other. 
The main findings of the paper are the identification of two distinct mechanisms of how bi- and multi-stability arises as well as linking system-level behaviour to constraints on the network structure.

The paper is structured as follows. 
In Section~\ref{sec:model}, we introduce the general class of equations for SIS dynamics on hypergraphs and simplicial complexes and outline how they can be derived from individual-based models.
Section~\ref{sec:results} focuses on single population models that can be reduced to a single dynamical equation and identifies what interaction order is necessary for bistability.
In Section~\ref{sec:TwoPop} we consider two-population models for symmetric and near-symmetric coupling.
Here multistability can already arise in the presence of triplet interactions.
We conclude with a discussion and identify potential future work in Section~\ref{sec:discussion}.







\section{Models}\label{sec:model}

Consider a set of $N$~nodes that we simply enumerate by $V = \{ 1,2, \dotsc , N\}$. A \emph{hyperedge} is specified by a subset~$H\subset V$; 
the number of distinct nodes $\ell=\abs{H}$ is the order of a hyperedge.
The set of nodes together with a set~$\mathcal{H}$ of hyperedges form a \emph{hypergraph}.
Each $\ell$-uniform subhypergraph $\mathcal{H}^{(\ell)} = \{H\in\mathcal{H}: \abs{H}=\ell\}$ that contains the edges of order~$\ell$ in~$\mathcal{H}$ can be identified with an adjacency tensor~$h^{(\ell)}$ in the usual way:
For example, $h^{(3)}_{ijk} = 1$ if $\{i,j,k\}\in \mathcal{H}$.

A \emph{simplicial complex} is a hypergraph that satisfies the additional closure relation: 
If $H\in\mathcal{H}$ then for any $H'\subset H$ we have $H'\in\mathcal{H}$.
In this case, the hyperedge is a \emph{simplex}.
As an example, consider for a given graph with adjacency matrix~$(a_{ij})$ the simplicial complex one obtains by adding a simplex for each clique.
For a three-simplex~$\{i,j,k\}$ we then have $h^{(3)}_{ijk} = a_{ij}a_{jk}a_{ki}$.

\newcommand{\n}{\mathbf{n}}
\newcommand{\m}{\mathbf{m}}


\subsection{SIS Dynamics of Individuals}

\newcommand{\Hc}{\mathcal{H}}

We consider SIS dynamics on hypergraphs and simplicial complexes, where the state $x_i(t)\in[0,1]$ of node~$i$ at time~$t$ corresponds to the probability of the node being infected. 
In classical graph-based SIS models, the infection pressure that node~$j$ exerts on node~$i$ is proportional to $a_{ij}(1-x_i)x_j$; the effect of all nodes in the neighborhood of~$i$ is the sum of the individual contributions. 
{{Note that this is in fact equivalent to closing the exact individual-based model at the level of pairs, where the joint probability of a node and its neighbours is approximated by the product of node-level probabilities.}}
For SIS dynamics on a hypergraph $\Hc$, write $\Hc_i = \{H\in\Hc\mid i\in H\}$ for the set of hyperedges that contain the node~$i$.
Generalizing the dynamics on graphs to include multi-body interactions, the infection pressure onto node~$i$ via a hyperedge $H\in\Hc_i$ is proportional to $(1-x_i) \tau_{H} \prod_{j\in H, j\ne i} x_j$, where $\tau_H$~is the infection rate corresponding to the hyperedge.
Hence, the individual-based mean-field model for SIS epidemic propagation on a hypergraph takes the form
\begin{equation}
\dot x_i = (1-x_i) \sum_{H\in {\cal H}_i} \tau_{H}  \prod_{j\in H,j\ne i} x_j - \gamma x_i, \label{individual based hypergraph}
\end{equation}
for $i=1,2,\dotsc, N$, where~$\gamma$ denotes the recovery rate common to all nodes.



\begin{remark}
    First, while we consider polynomial interactions here, for a hyperedge~$H$ of order~$\ell$, one could also consider more general interaction functions $f^{(\ell)}$.
    Specifically, the interaction pressure through~$H\in\Hc_i$ onto~$i$ would be proportional to
    $(1-x_i) \tau_{H} f^{(\ell)}(x_H)$, where~$x_H$ is the vector of states with indices in~$H$.

    Second, while we here restrict ourselves to fairly simple interaction pressures, note that the infection pressure can be expressed in more sophisticated ways. 
As an example we mention \cite{de_arruda_social_2020}, where the differential equation for an individual is given in the form
\begin{equation*}
    \frac{dx_i}{dt}=-\delta x_i+\lambda(1-x_i)\sum_{j:v_i\in e_j}\sum_{k=\Theta_j}^{|e_j|-1}\lambda^*(|e_j|)\mathbb{P}_{e_j}^{v_i}(K=k) .
\end{equation*}
In this formulation the infection pressure is still in the stochastic (non mean-field) form enabling the dependence on the number of infected nodes in a hyperedge. 
\end{remark}

In the following, we will focus on dynamics on hypergraphs with hyperedges up to order $\ell=4$.
Thus, the hyperedges consist of (a)~edges~$E\subset\Hc$ of order two corresponding to two-body interactions, (b)~triplets, or simply `triangles',~$T\subset\Hc$ of order three for the three-body interactions, and (c)~quadruplets, or simply `squares',~$S\subset\Hc$ of order four representing four-body interactions.
The sets~$E,T,S$ can be identified via an incidence matrix where the entry $(i,H)$ is one if node~$i$ is contained in the hyperedge~$H$.
For example, $E$ can be identified with a matrix of size $N\times |E|$ where~$|E|$ denotes the number of edges.
If the infection rate only depends on the edge order, that is, $\tau_\ell$~is the rate for any edge of order~$\ell$, the dynamical equations of $N$~individual nodes in SIS dynamics on a hypergraph can be written as 
\begin{equation}
    \begin{split}
\dot{x}_i &= \tau_2 (1-x_i) \sum_{j=1}^{\abs{E}} E_{ij} \sum_{k\neq i} x_k E_{kj}  + \tau_3 (1-x_i)\sum_{j=1}^{\abs{T}} T_{ij} \sum_{k,l\neq i, k<l} x_kx_l T_{kj}T_{lj}\\
 &\qquad+ \tau_4 (1-x_i)\sum_{j=1}^{\abs{S}} S_{ij} \sum_{k,l,m\neq i, k<l<m} x_kx_lx_m S_{kj}S_{lj}S_{mj} -\gamma x_i.
    \end{split}\label{individual based _hyp_graph}
\end{equation}
for $i=1,\ldots N$. 
We give a concrete example to illustrate specific instances of these general model equations.

\begin{example} \label{exam:hyp_graph}
A fully connected graph with 4 nodes, but with only two triangles: $N=4$,
\begin{itemize}
\item $E=\{ (1,2), (1,3), (1,4), (2,3), (2,4), (3,4)\}$, $|E|=6$,
\item $T=\{ (1,2,3), (1,3,4)\}$, $|T|=2$,
\item $S=\{(1,2,3,4)\}$, $|S|=1$.
\end{itemize}
The first equation of the individual based  model for this hypergraph takes the form
\[
\dot{x}_1 = \tau_2 (1-x_1) (x_2+x_3+x_4) + \tau_3 (1-x_1) (x_2x_3 + x_2x_4) + \tau_4 (1-x_1) x_2x_3x_4 - \gamma x_1 .
\]
The other equations follow in a similar way and we do not give them explicitly.
\end{example}


For simplicial complexes induced by an underlying graph with adjacency matrix~$A=(a_{ij})_{i,j=1,\dotsc, N}$, the coefficients of the adjacency tensor can be expressed in terms of entries of $A$.
Assuming that three- and four-body interactions are considered (i.e., all triangles and fully connected 4-cliques in a classical network formulation become a three- and four-body interaction, respectively), the individual based mean-field model takes the form 
\begin{equation}
    \begin{split}        
    \dot{x}_i&= \tau_2  (1-x_i)\sum_{j=1}^N a_{ij}x_j + \tau_3 (1-x_i)\sum_{j=1}^N \sum_{k=j+1}^N a_{ij}a_{ik}a_{jk}x_jx_k \\
    &\qquad+ \tau_4 (1-x_i) \sum_{j=1}^N \sum_{k=j+1}^N \sum_{l=k+1}^N a_{ij}a_{ik}a_{il}a_{jk}a_{jl}a_{kl}x_jx_kx_l -\gamma x_i,
    \end{split}    
    \label{individual_based_simpl_compl}
\end{equation} 
for $i=1,2,\ldots , N$. It is worth noting that in this particular example, we chose to neglect interactions among five or more nodes even if the corresponding fully connected cliques are present.

\begin{example}
As an example for a simplicial complex induced by a graph, consider the regular ring lattice with~$N$ nodes, where each node is connected to their~$n$ nearest neighbours. 
The node-level equations are given by
\begin{equation}
\begin{split}
    \frac{dx_i}{dt}&=\tau_2 (1-x_i)\sum_{j=1}^{n} x_{i+j}+ \tau_2  (1-x_i)\sum_{j=1}^{n} x_{i-j}\\
    &\qquad+\tau_3 (1-x_i)\sum_{j=i+1}^{i+n-1}\sum_{k=j+1}^{i+n} x_jx_k + \tau_3 (1-x_i)\sum_{j=i-1}^{i-n+1}\sum_{k=j-1}^{i-n} x_jx_k\\
    &\qquad+ \tau_4 (1-x_i)\sum_{j=i+1}^{i+n-2}\sum_{k=j+1}^{i+n-1}\sum_{l=k+1}^{i+n} x_jx_kx_l + \tau_4 (1-x_i)\sum_{j=i-1}^{i-n+2}\sum_{k=j-1}^{i-n+1}\sum_{l=k+1}^{i-n} x_jx_kx_l-\gamma x_i,
    \label{eq:individual based _reg_ring_latt_up_to_sq}
\end{split}
\end{equation}
where running indices are to be considered modulo~$N$. 
In contrast to Example \ref{exam:hyp_graph}, we note that on the ring-lattice, the 1-simplices, 2-simplices of every 3-simplex partakes in the dynamics. 
However, in that example, not all 2-simplices are present despite the existence of a hyperedge encompassing all nodes.
\end{example}




\subsection{Low-dimensional reduction and examples} 
\label{subsec:lowdim}

Since the individual-based mean-field equations for a large number of nodes~$N$ are challenging to analyse, further dimension reduction must be employed. 
This can be carried out by assuming homogeneity: Each node has the same number of neighbours and that it is part of the same number of two-body, three-body, and higher-order multi-body interactions; the actual number of interactions can be and it is usually different for different interaction orders.
This allows us to reduce the $N$-dimensional system to one single differential equation and the same argument was
 used to derive equations including~\eqref{eq:ex_iacopo}, \eqref{eq:ex_degree} and \eqref{eq:ex_kiss_insights}. 
In the following, we discuss such reductions to low-dimensional systems
in more detail. 

\begin{figure}
    \centering
    \includegraphics[width=\linewidth]{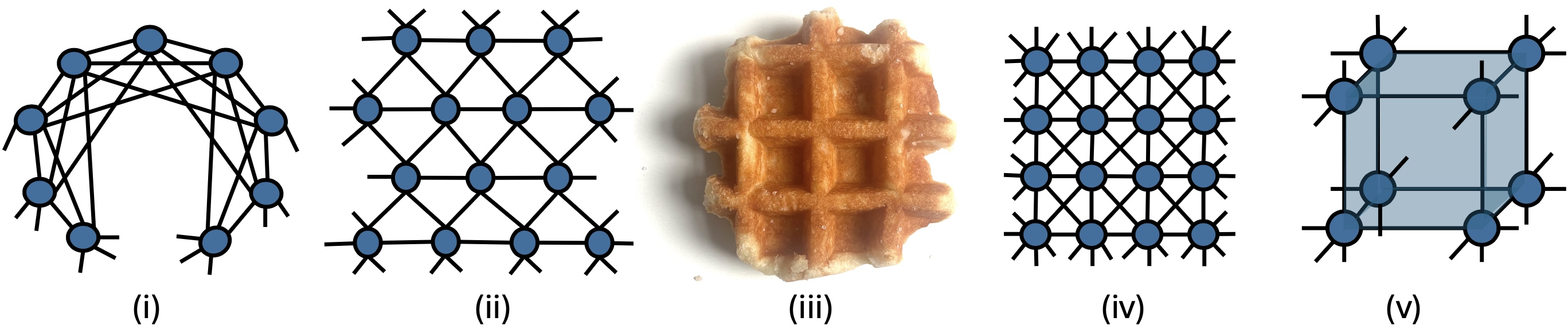}
    \caption{Sketches of homogenous networks with topologically equivalent nodes. 
   From left to right we have: (i)~ring lattice, (ii)~triangular lattice, (iii)~square lattice, (iv)~square lattice with diagonals, and (iv)~cube lattice with hyperedges along the faces. 
    Periodic boundaries are assumed for each. 
    Despite the homogeneity, the networks can be differentiated by the number of two-, three-, four-body interactions that affect each node.}
    \label{fig:many_networks}
\end{figure}

Homogeneity allows us to go from individual-based mean-field models to a single differential equation; we illustrate the main steps here. 
Let us take the ring lattice with each node connecting to~$n$ nearest neighbours on the right and left, see top left panel in Figure~\ref{fig:many_networks}. 
Further we consider that each triangle and fully connected square that arises are in fact simplicial complexes of orders~2 and~3, respectively. The individual-based mean-field model for such an underlying contact network is given in \eqref{eq:individual based _reg_ring_latt_up_to_sq}.
It now follows immediately that each node is contained in $2n$~one-simplices,  $n(n-1)+1$~two-simplices and $n(n-1)(n-2)/3+2$~three-simplices. 
Moreover, if the initial conditions in all nodes are the same, all variables of the system evolve identically to one another.
Thus, writing $y(t)=x_i(t)$ for each coordinate and taking into account that all nodes belong to the same number of two-body, three-body and four-body interactions, 
the individual level model~\eqref{eq:individual based _reg_ring_latt_up_to_sq} can be reduced to one single equation, namely
\begin{align}
\label{eq:reg_ring_latt_up_to_sq}
    \frac{dy}{dt}=\tau \underbrace{(2n)}_{\text{\# of edges}}(1-y)y+\beta\underbrace{(n(n-1)+1)}_{\text{\# of triangles}}(1-y)y^2+\delta\underbrace{\left(\frac{n(n-1)(n-2)}{3}+2\right)}_{\text{\# of squares}}(1-y)y^3 -\gamma y.
\end{align}
This general approach can be extended to other networks as long as each node belongs to the same number of simplices of different order. 
By introducing~$d$,~$r$ and~$q$ as the number of two-body, three-body and four-body interactions, respectively, the equation above can be genralised to
\begin{equation}
\dot y(t) = \tau d (1-y) y + \beta r (1-y)y^2 + \delta q (1-y) y^3 -\gamma y.
\label{eq:general_d_r_q_intuitive}
\end{equation}
This opens up the possibility to consider many different networks and it increases the range of values the rate parameters can take. Alternatively, if graphability is not an issue and stochastic simulations on an explicit network are not desired, then the system can be studied with arbitrary rates.
To illustrate the flexibility of this approach, below we list further choices for~$d$,~$r$ and~$q$  for a range of networks, some illustrated in Figure~\ref{fig:many_networks}. In no particular order options include,
\begin{enumerate}[(i)]
\item The ring lattice with links to the $n$ nearest neighbours on left and right: $d=2n$, $r=n(n-1)+1$, $q=\frac{n(n-1)(n-2)}{3}+2$,
\item Triangle lattice on a torus: $d=6$, $r=6$, $q=0$,
\item Square lattice on a torus: $d=4$, $r=0$, $q=0$,
\item Square lattice with diagonals on a torus: $d=8$, $r=12$, $q=4$,
\item Three dimensional cubic lattice with squares as hyperedges: $d=6$, $r=0$, $q=12$,
$d=(N-1)$, $r={N-1\choose 2}$, $q={N-1 \choose 3}$.
\end{enumerate}
Note that a fully connected network of size~$N$ has parameters $d=(N-1)$, $r={N-1\choose 2}$, $q={N-1 \choose 3}$ but corresponds to Model~C. The choices above and those in models given by equations~\eqref{eq:ex_iacopo}, \eqref{eq:ex_degree} and~\eqref{eq:ex_kiss_insights} are summarised in Table~\ref{tab:examples_of_poly_coeff}. 

From a general dynamical systems point of view, the evolution equations can be seen as polynomial equations.
Indeed, reordering terms, the population level dynamics can be written as 
\begin{align}
    \frac{dy}{dt}&=(1-y)\sum_{q=1}^{Q}C_{q+1}y^{q} -\gamma y \nonumber\\
    &=(C_2-\gamma)y + \sum_{q=2}^{Q}(C_{q+1}-C_{q})y^{q}  -C_{Q+1}y^{Q+1},    \label{eq:general_one_pop_ODE_with_Ci}
\end{align}
where the coefficients~$C_q$ are typically a product between the transmission rate within a simplex/hyperedge with $q$-body interactions and the number of such simplices a typical node belongs to. 
However, when formulating models we find that terms are better understood if the alternative formulation in equation~\eqref{eq:general_d_r_q_intuitive} is used. Both formulations are used throughout the paper.

\begin{table}
\begin{center}
\scriptsize{
\begin{tabular}{ |c|c|c|c|c|} 
\hline
        & $\mathbf{y^1}$        & $\mathbf{y^2}$         & $\mathbf{y^3}$  &                 $\mathbf{y^4}$\\
        \hline
{\textbf{General}} & $\mathbf{C_2}$-\boldmath$\gamma$  &$\mathbf{C_3-C_2}$      & $\mathbf{C_4-C_3}$ or $\mathbf{-C_3}$     & $\mathbf{-C_4}$ \\
\hline
{{A}} &  $\beta_{1}\langle k_{1}\rangle-\mu$& $\beta_{2}\langle k_{2}\rangle-\beta_{1}\langle k_{1}\rangle$ & $\beta_{3}\langle k_{3}\rangle-\beta_{2}\langle k_{2}\rangle$ & $-\beta_{3}\langle k_{3}\rangle$\\
\hline
{{B}} & $\beta_{\mathrm{eff}}-1$ & $\gamma \beta_{\mathrm{eff}}-\lambda \beta_{\mathrm{eff}}$ & $-\gamma \beta_{\mathrm{eff}}$& \\
\hline
{{C}} & $s_1 -\gamma$ & $ s_2/2- s_1$ & $s_3/6-s_2/2$ or $-s_2/2$& $-s_3/6$\\
\hline
{(i)} & $\tau \times 2n -\gamma$ & $\beta (n(n-1)+1)-\tau \times 2n$ & $\delta \left(\frac{n(n-1)(n-2)}{3}+2\right)- \beta (n(n-1)+1)$ & $-\delta \left(\frac{n(n-1)(n-2)}{3}+2\right)$ \\
\hline
{(ii)} & $\tau \times 6 -\gamma$ & $\beta \times 6 - \tau \times 6$ & $-\beta \times 6$ & \\
\hline
{(iii)} & $\tau \times 6 -\gamma$ & $ - \tau \times 6$ &  & \\
\hline
{(iv)} & $\tau \times 8 -\gamma$ & $\beta \times 12 - \tau \times 8$ & $\delta \times 4-\beta \times 12$ & $-4 \times \delta$ \\
\hline
{(v)} & $\tau \times 6 -\gamma$ & $ - \tau \times 6$ & $+\delta \times 12$ & $-\delta \times 12$\\
\hline
\end{tabular}
}
\end{center}
\caption{
Interaction parameters---including the presence of higher-order interactions---determine the order and coefficients of the polynomial contagion dynamics~\eqref{eq:general_one_pop_ODE_with_Ci}
for homogeneous networks.
Rows~A, B, and~C correspond to models~\eqref{eq:ex_iacopo}, \eqref{eq:ex_degree}, and~\eqref{eq:ex_kiss_insights}, respectively.
Rows~(i--v) give the parameters for a
(i)~ring lattice, (ii)~triangular lattice, (iii)~square lattice, (iv)~square lattice with diagonals, and (v)~cube lattice shown in Figure~\ref{fig:many_networks} and discussed in detail in the main text.
}
\label{tab:examples_of_poly_coeff}
\end{table}

\subsection{Multi-population models}

If one assumes homogeneity in a particular population of individuals, rather than of all individuals, one obtains an equation for each population.
Consider~$M$ populations, where~$y_m\in[0,1]$ describes the probability of of a typical node in population~$m$ to be infected and $y=(y_1, \dotsc, y_M)$ is the joint state of all populations.
Write a multiindex $\n=(n_1, \dotsc, n_M)\in \N^M$ of order $\abs{\n} = n_1+\dotsb +n_m$ and set $y^\n = y_1^{n_1}\dotsb y_M^{n_M}$.
Then the multipopulation SIS dynamics in a very general form are given by ordinary differential equations with polynomial vector field
\begin{equation}
\label{eq:ModelGen}
    \dot y_m = \sum_{\abs{\mathbf{n}}\leq L}C_\mathbf{n} y^\n -\gamma y_m
\end{equation}
for $m=1, \dotsc, M$ and a parameter $\gamma>0$.
The maximal polynomial degree $L<\infty$ that depends on the maximal order of the multibody interactions.
Moreover, the coefficients~$C_\mathbf{n}$ are directly relates to the properties of the underlying hypergraph as for $M=1$ population above.

\begin{figure}
    \centering
\includegraphics[scale=0.5,trim={8cm 3.5cm 8cm 3.5cm},clip]{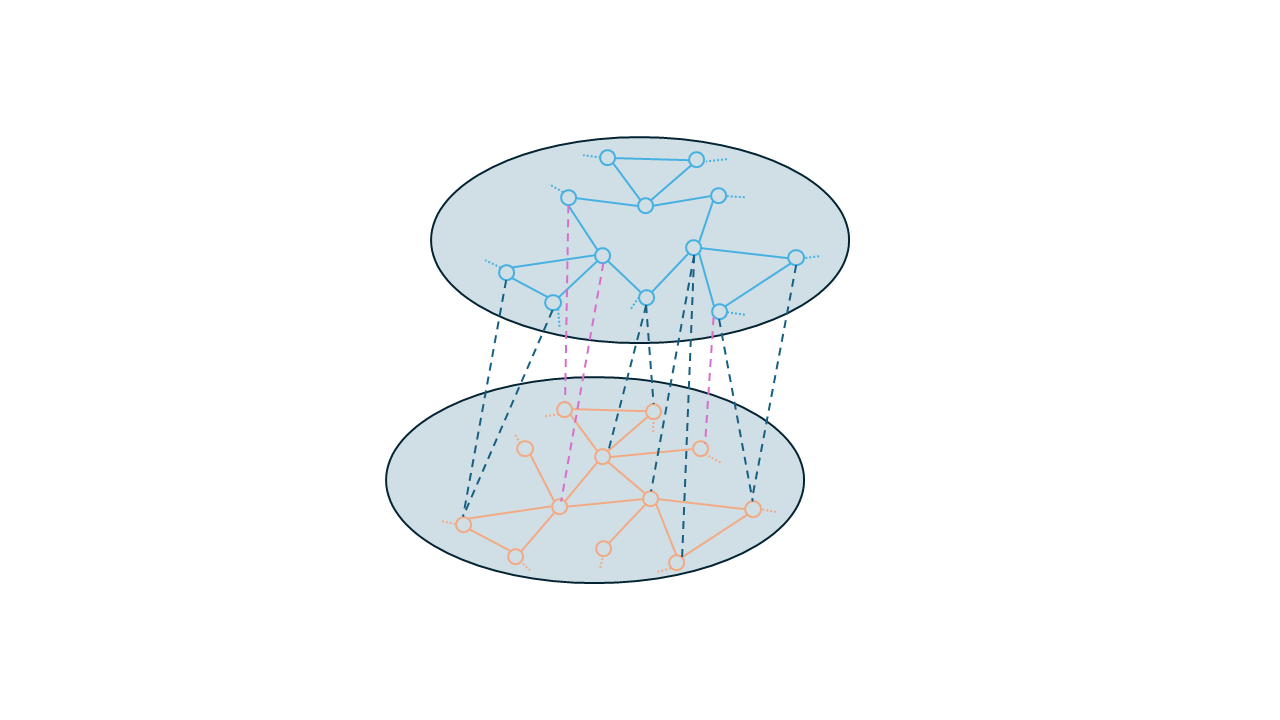}
    \caption{Sketch of a network consisting of two populations of topologically equivalent nodes.
    Each node in a given population is incident to the same number of edges (two in the upper population, one in the lower) and three-body interactions (one in the top and two in the lower) with further pairwise (dashed cyan) and 
    three-body interactions (dashed blue) between the two populations. Note that links within a three-body interaction (three-clique) do not count towards edges/pairwise interactions. 
    Dotted stubs correspond to further interactions within each population.
    The interactions between the two populations can be modulated by considering only two-body interactions, only three-body interactions, or both types of interactions. }
    \label{fig:two_pop_illustration}
\end{figure}

A network consisting of $M=2$ populations is depicted in Figure~\ref{fig:two_pop_illustration}.
We analyse bifurcations in such a two-population model explicitly in Section~\ref{sec:TwoPop}.


\section{Complete bifurcation regimes for single population models}
\label{sec:results}
We now focus on the simplest class of models whose dynamics are given by a single differential equation, that is,~\eqref{eq:ModelGen} for $M=1$.
This includes networks where all nodes are topologically equivalent.
We identify all possible qualitative dynamics of these models and determine the bifurcations that lead to bistability, hysteresis, multistability, and how they depend on the network parameters.

First, recall the general classification of the transcritical bifurcation and its criticality.
Specifically, consider a one-dimensional equation 
\begin{equation}
\label{eq:general_1d_ODE}
    \dot y =f(y,\lambda),
\end{equation}
that depends on a parameter~$\lambda$.
Assume that $y=0$ is an equilibrium for all~$\lambda$, that is, $f(0,\lambda)=0$.
If $\lambda^{*}$ is a solution of 
    $\partial_{y}f(0,\lambda)=0$
and
$ \partial_{\lambda y}f(0,\lambda^{*})\ne 0$, $  \partial_{y y}f(0,\lambda^{*})\ne 0,$
then the system undergoes a transcritical bifurcation at~$\lambda^*$.
Depending on the sign of the higher derivatives, the transcritical bifurcation is one of the four types depicted in Figure~\ref{fig:trans_crit_class}.

\begin{figure}[h!]
     \centering   
     \includegraphics[width=0.24\linewidth]{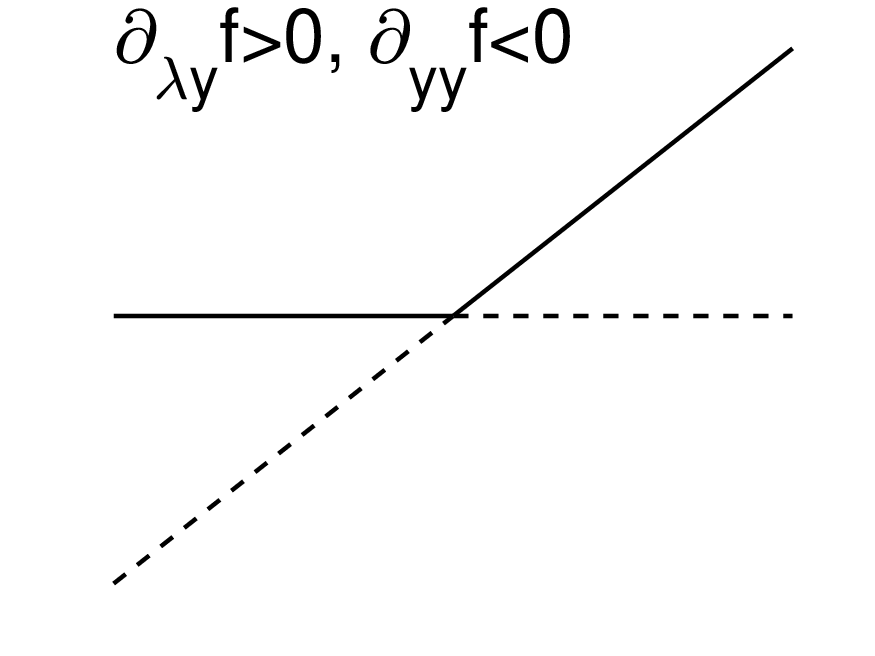}
     \includegraphics[width=0.24\linewidth]{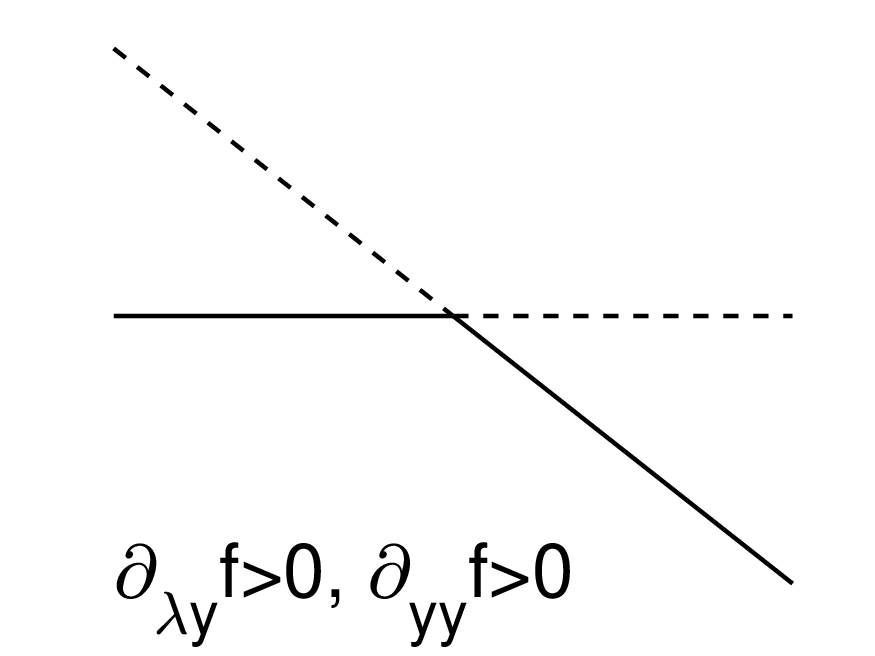}
     \includegraphics[width=0.24\linewidth]{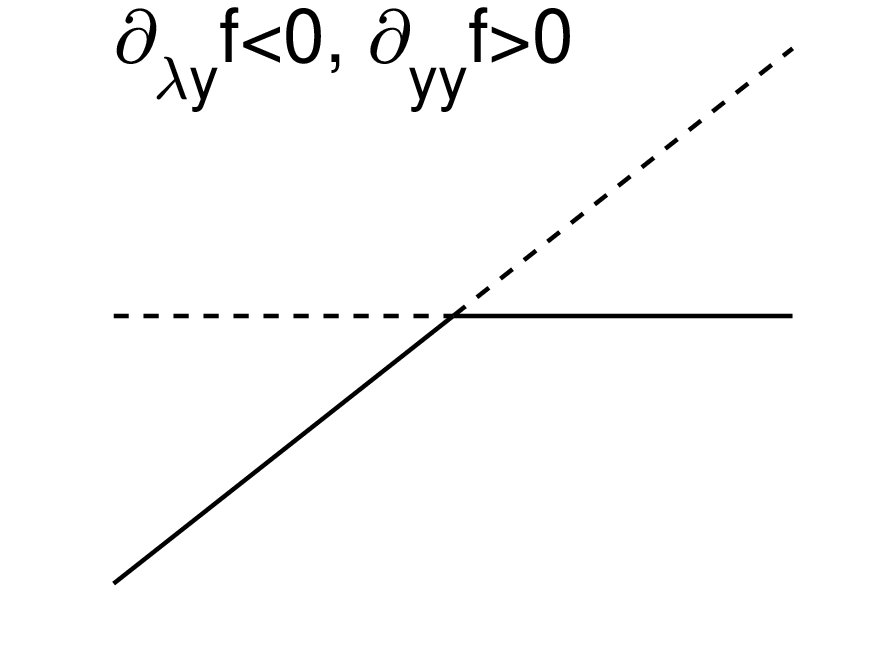}
     \includegraphics[width=0.24\linewidth]{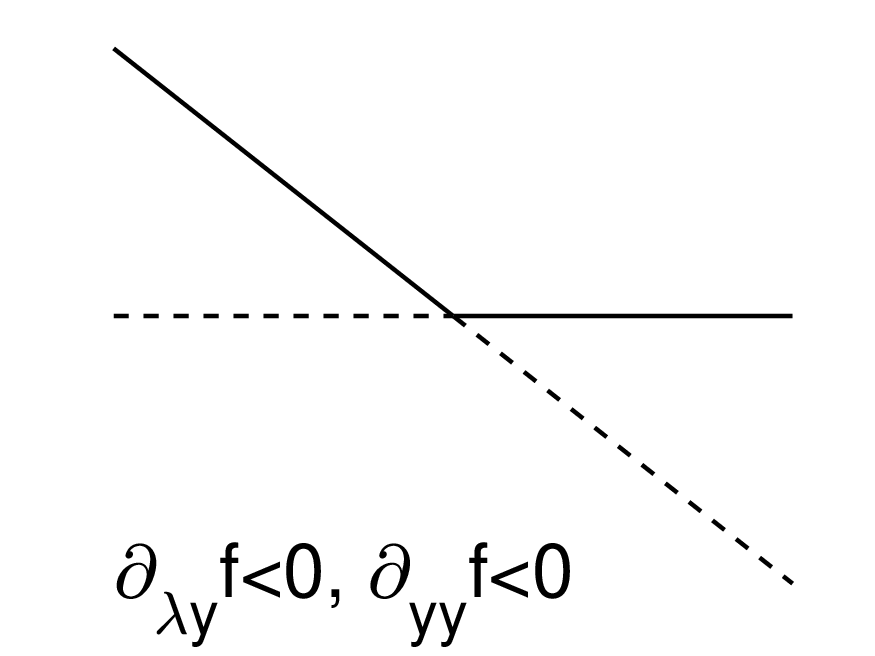}
     \caption{Illustration of possible  transcritical bifurcation classes based on the partial derivatives of the general vector field in~\eqref{eq:general_1d_ODE} at the bifurcation point. The horizontal axis represents the bifurcation parameter~$\lambda$ and the vertical axis denotes the state variable $y$. 
     Stable branches are depicted by a solid line, while unstable branches correspond to dashed lines. 
     }
     \label{fig:trans_crit_class}
\end{figure}

\subsection{Complex contagion up to three-body interactions}
\label{sec:OnePopTriangles}

For a one-dimensional mean-field equation, the derivatives of the vector field now determine the possible bifurcations of the disease free state. We first consider interactions given by two-body and three-body interactions.

\begin{prop}
    \label{lemma:2_simplices}
    For a disease transmission model of the form
    \begin{equation}
    \label{eq:CubicModel}
        \dot y=(C_2-\gamma)y+(C_3-C_2)y^2-C_3y^3
    \end{equation}
    with $C_2\neq C_3$
    the disease-free steady state undergoes a transcritical bifurcation when $C_2=\gamma$. Furthermore, this can be of two types: (i)~if $C_3<\gamma$ then a forward bifurcation results, and (ii)~if $C_3>\gamma$ then the transcritical bifurcation is of backward type and the systems exhibits bistability as shown in Figure~\ref{fig:pos_bif_diag_upto_tri}.
\end{prop}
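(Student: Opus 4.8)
The plan is to take $C_2$ as the bifurcation parameter and apply directly the transcritical classification recalled above for~\eqref{eq:general_1d_ODE}. Writing $f(y,C_2)=(C_2-\gamma)y+(C_3-C_2)y^2-C_3y^3$, one checks immediately that $f(0,C_2)=0$ for all $C_2$, so $y=0$ is an equilibrium throughout and the hypothesis of the classification is met. First I would compute $\partial_y f(0,C_2)=C_2-\gamma$; setting this to zero pins the candidate bifurcation value at $C_2=\gamma$, matching the claimed threshold. To confirm a genuine transcritical bifurcation there, I would verify transversality and non-degeneracy: $\partial_{C_2 y}f(0,\gamma)=1\neq 0$ and $\partial_{yy}f(0,\gamma)=2(C_3-\gamma)$, the latter nonzero precisely because at the bifurcation $C_2=\gamma\neq C_3$ by hypothesis. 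This places the system in one of the classes of Figure~\ref{fig:trans_crit_class}.

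To distinguish forward from backward, I would read off the slope of the emerging nontrivial branch from the normal-form coefficients $a=\partial_{C_2 y}f(0,\gamma)=1$ and $b=\tfrac{1}{2}\partial_{yy}f(0,\gamma)=C_3-\gamma$. The nontrivial equilibrium near threshold satisfies $y\approx-(a/b)(C_2-\gamma)=-(C_2-\gamma)/(C_3-\gamma)$. When $C_3<\gamma$ the factor $-1/(C_3-\gamma)$ is positive, so the endemic branch is positive for $C_2>\gamma$: a forward (supercritical) bifurcation, case~(i). When $C_3>\gamma$ the sign flips and the branch is positive for $C_2<\gamma$: a backward (subcritical) bifurcation, case~(ii).

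For the bistability assertion in case~(ii) I would pass from this local picture to the global one by factoring $f(y)=y\big[(C_2-\gamma)+(C_3-C_2)y-C_3y^2\big]$, so that the nontrivial equilibria solve $C_3y^2-(C_3-C_2)y-(C_2-\gamma)=0$. For $C_2$ just below $\gamma$ with $C_3>\gamma$, both the sum of the roots $(C_3-C_2)/C_3$ and the product $-(C_2-\gamma)/C_3$ are positive, so there are two positive equilibria; together with the disease-free state, which is stable since $\partial_y f(0)=C_2-\gamma<0$, this yields the bistable configuration of Figure~\ref{fig:pos_bif_diag_upto_tri}. I would then locate the left endpoint of the bistable window at the fold where the discriminant $(C_3-C_2)^2+4C_3(C_2-\gamma)$ vanishes, below which the two positive equilibria annihilate in a saddle-node. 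Checking the sign of $\partial_y f$ at the two positive roots confirms the outer one is stable and the inner one unstable.

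The local transcritical computation is routine; the genuine obstacle is the global step in case~(ii), namely showing that the subcritical branch does not simply escape to large $y$ but turns around at a saddle-node fold to produce a stable endemic state coexisting with the stable disease-free state over an explicit $C_2$-interval. This requires the discriminant and Vieta root-location analysis of the quadratic together with the stability sign of $\partial_y f$ at each branch, rather than the purely local normal-form data; one should also verify that the relevant roots lie in the physical range so that the bistability is meaningful for the contagion model.
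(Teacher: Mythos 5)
Your proposal is correct, and its local half coincides with the paper's: both invoke the transcritical classification for~\eqref{eq:general_1d_ODE}, with the nondegeneracy $\partial_{yy}f(0,\gamma)=2(C_3-\gamma)\neq 0$ guaranteed by $C_2\neq C_3$ at the bifurcation value, and both read off forward versus backward type from the sign of $C_3-\gamma$. Where you genuinely diverge is the global step. The paper does not analyse the quadratic $C_3y^2-(C_3-C_2)y-(C_2-\gamma)=0$ in $y$ at all; instead it inverts the equilibrium relation to parametrise the nontrivial branch by $y$, writing $C_2(y)=\gamma/(1-y)-C_3y$, and locates the fold from $C_2'(y)=0$, i.e.\ $(1-y)^2=\gamma/C_3$. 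That one-line computation immediately gives the fold location $y^*=1-\sqrt{\gamma/C_3}$ and the clean equivalence that the fold lies in $(0,1)$ if and only if $C_3>\gamma$; moreover, since every $y\in(0,1)$ on the curve $C_2(y)$ is by construction a genuine equilibrium, the range and realness checks you flag as the ``genuine obstacle'' (discriminant positivity, Vieta sign conditions, and verifying the roots lie in the physical interval) are absorbed for free. Your discriminant--Vieta route buys something the paper's does not state explicitly: the left endpoint of the bistable window in the parameter $C_2$ (where $(C_3-C_2)^2+4C_3(C_2-\gamma)=0$) and an explicit stability verdict at each branch via the sign of $\partial_y f$, rather than inferring stability from the bifurcation-diagram picture. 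On the other hand, your method leans on the nontrivial equilibrium equation being exactly quadratic, whereas the paper's parametrisation-by-$y$ argument is the one that scales: it is reused verbatim for the quartic model of Proposition~\ref{lemma:3_simplices}, where $C_2(y)=\gamma/(1-y)-C_3y-C_4y^2$ and root formulas would be unwieldy. Two small points to tighten in your write-up: the Vieta step needs the roots to be real before sum/product signs locate them, which follows from the discriminant equalling $(C_3-\gamma)^2>0$ at $C_2=\gamma$ plus continuity (worth one sentence), and the upper root being below $1$ follows at once from $f(1)=-\gamma<0$ with negative leading coefficient --- neither is a gap, just unfinished bookkeeping that the paper's parametrisation sidesteps.
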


\begin{proof}

According to the general theorem about the transcritical bifurcation and using the sign conditions in Figure \ref{fig:trans_crit_class}, we have forward transcritial bifurcation at $C_2=\gamma$, when $C_3<\gamma$. Similarly, the bifurcation is backward when $C_3>\gamma$.

The global nature of the curve can be obtained in this simple case by expressing the parameter~$C_2$ in terms of~$y$ as
\[
C_2= \frac{\gamma}{1-y} -C_3 y.
\]
Differentiating~$C_2$ with respect to~$y$ yields that the bifurcation curve has a fold point (saddle-node bifurcation) with $C_2'(y)=0$ when $\gamma/C_3 = (1-y)^2$. 
This fold point lies in the relevant domain ($0<y<1)$ when $C_3>\gamma$, i.e., when the transcritical bifurcation is of backward type. This completes the proof of the statement.
\end{proof}

\begin{remark}
    If $C_2=C_3$, then the quadratic term vanishes, leading to a $y\mapsto-y$ symmetry. 
    This means that the trivial state loses stability in a pitchfork bifurcation.
    Furthermore, this is an example of how three-body interactions can change the type of pitchfork bifurcation; cf.~\cite{kuehn2021universal}.
\end{remark}

\begin{figure}[h!]
     \centering
     \includegraphics[scale=0.2]{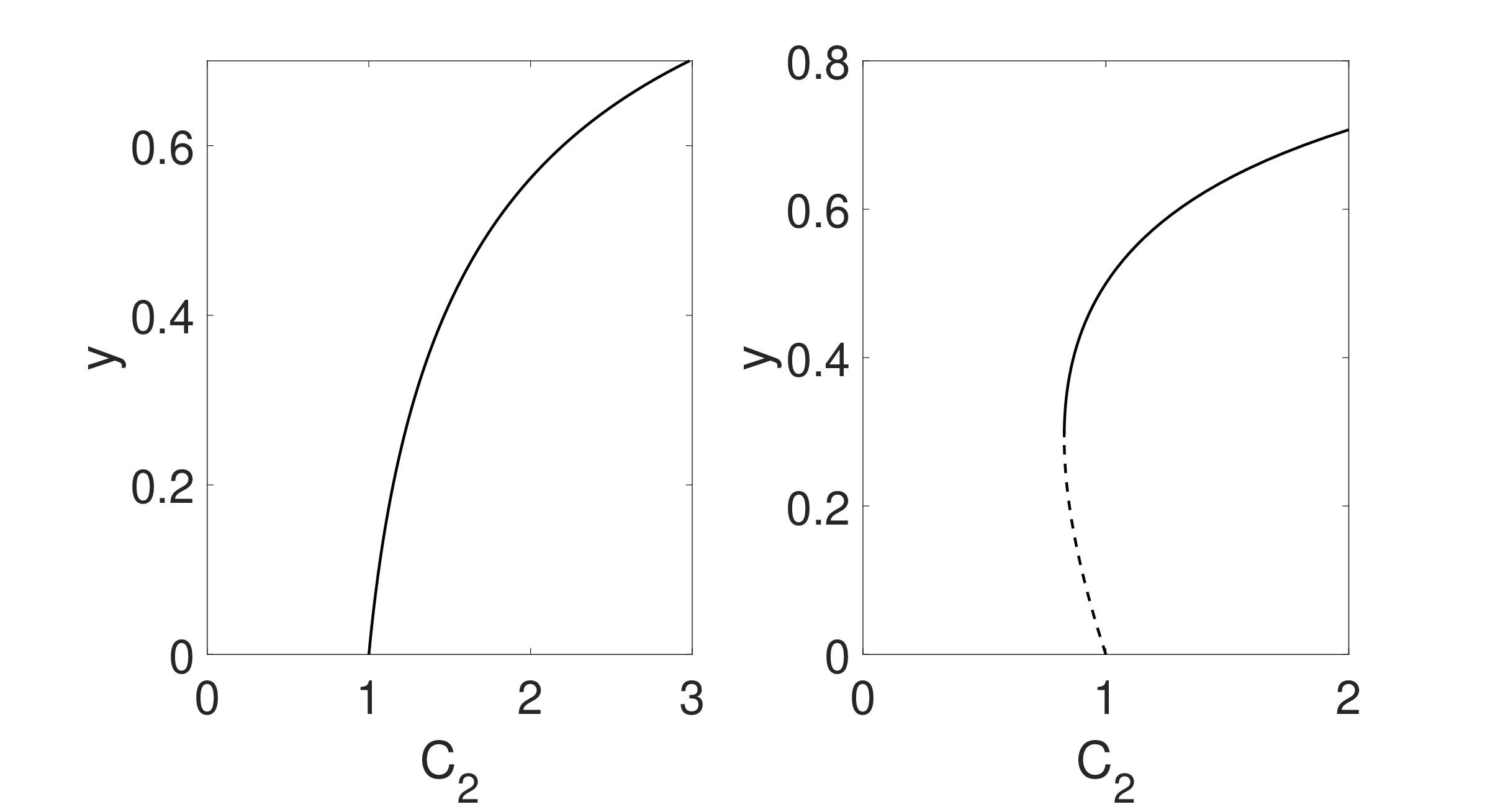}
     \caption{Possible outcomes for a system such as given in equation~\eqref{eq:CubicModel}. Parameters are $C_3=0.5$ (left) and $C_3=2$ (right) with  $\gamma=1$ in both cases.}
     \label{fig:pos_bif_diag_upto_tri}
\end{figure}

Proposition~\ref{lemma:2_simplices} now allows us to classify the dynamics in terms of the network parameters.
For a ring lattice as given in equation~\eqref{eq:reg_ring_latt_up_to_sq} with infection through triangles or 2-simplices. 
Consider~$\tau$ as the bifurcation parameter. 
Rearranging the equation as per Proposition~\ref{lemma:2_simplices} above leads to
\begin{align}
    \dot y =(\tau \times 2n-\gamma)y + \left(\beta \times (n(n-1)+1)-\tau \times 2n \right)y^2-\beta 
 \times (n(n+1)+1)y^3
\end{align}
According to Proposition~\ref{lemma:2_simplices}, the bifurcation occurs when $2\tau n=\gamma$ with bistability if 
\begin{equation}
\beta (n(n-1)+1)-\gamma>0,
\end{equation}
which is equivalent to 
\begin{equation}
n(n-1)>(\gamma-\beta)/\beta.
\end{equation}
For fixed transmission rates through links and 2-simplices as well as fixed recovery, the condition above gives a constraint on the density of links, and implicitly 2-simplices, needed for bistability. The immediate interpretation of this result is that the same disease unfolding on communities with different contact patterns can lead to different outcomes.

Our result can also be used to re-derive the bifurcation conditions for the models cited in Section~\ref{subsec:lowdim}. For example, for the model up to 2-simplices in~\cite{kiss_insights_2023}, our Proposition predicts that the bifurcation occurs at $C_2=\gamma$, that is $\lambda=\gamma$. For bistability our Proposition requires $C_3>\gamma$ which translates to $\mu/2<\gamma$ which is exactly as reported in~\cite{kiss_insights_2023}.

\subsection{Complex contagion up to four-body interactions}
\label{sec:HomogeneousMultistab}
We now consider interactions given by two-body, three-body and four-body interactions and determine the bifurcations.
The different possibilities for the transcritical bifurcation characterize the possible contagion dynamics.

\begin{proposition}
\label{lemma:3_simplices}
    For a disease transmission model of the form
    \begin{equation}
     \label{eq:QuarticModel}
         \dot y =(C_2-\gamma)y+(C_3-C_2)y^2+(C_4-C_3)y^3-C_4y^4
    \end{equation}
    with $C_2\neq C_3$ the disease-free steady state undergoes a transcritical bifurcation when $C_2=\gamma$. 
    This can be of two types: (i) if $C_3<\gamma$ then a forward bifurcation results, and (ii) if $C_3>\gamma$ then the transcritical bifurcation is of backward type. 
    The systems exhibits three different global bifurcation curves:
    \begin{itemize}
        \item If $C_3>\gamma$, then the bifurcation curve has a fold point leading to bistability between the disease-free and the endemic steady state for certain values of the parameter $C_2$.
        \item If $C_3<\gamma$ and $C_4<\gamma$, then the bifurcation curve has no fold point, hence either the disease-free or the endemic steady state is stable for any value of the parameter $C_2$.
        \item If $C_3<\gamma$ and $C_4>\gamma$, then the bifurcation curve has two fold points, leading to two stable endemic branches. That is, bistability may occur between two endemic steady states for certain values of the parameter $C_2$.
    \end{itemize}
    The different cases are shown in Figure~\ref{fig:pos_bif_diag_upto_sq}. We note that two subfigures are shown for the case $C_3>\gamma$. The difference between them is in the value of~$C_4$, namely, there is an inflection point on the unstable branch when $C_4>\gamma$.
\end{proposition}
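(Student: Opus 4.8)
The plan is to reuse the strategy behind the proof of Proposition~\ref{lemma:2_simplices}, since the quartic model~\eqref{eq:QuarticModel} differs only by the extra degree-four term and the local bifurcation analysis is unchanged. Writing $f(y,C_2)$ for the right-hand side of~\eqref{eq:QuarticModel}, one checks that $f(0,C_2)=0$ for all $C_2$, that $\partial_y f(0,C_2)=C_2-\gamma$, and that $\partial_{C_2 y}f(0,\gamma)=1\neq 0$ while $\partial_{yy}f(0,\gamma)=2(C_3-\gamma)$. The general transcritical theorem together with the sign chart of Figure~\ref{fig:trans_crit_class} then applies verbatim: there is a transcritical bifurcation of the disease-free state at $C_2=\gamma$, forward when $C_3<\gamma$ and backward when $C_3>\gamma$. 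The quartic coefficient $C_4$ plays no role in this local step.

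For the global branch of nonzero equilibria I would parametrise the branch by $y$. Factoring out $y$ from $f=0$ and solving for the bifurcation parameter gives the equilibrium curve
\[
C_2 = P(y) := \frac{\gamma}{1-y} - C_3 y - C_4 y^2, \qquad y\in(0,1),
\]
so that fold points (saddle-node bifurcations) occur exactly where $P'(y)=0$, i.e. where $\gamma/(1-y)^2 = C_3 + 2C_4 y$. Setting $\phi:=P'$, the structural fact that drives the whole classification is $\phi''(y)=6\gamma/(1-y)^4>0$: thus $\phi$ is strictly convex on $(0,1)$ and has at most two zeros there. The number of folds is then read off from the boundary data $\phi(0)=\gamma-C_3$ and $\phi(1^-)=+\infty$, together with the sign of $\phi'(0)=2(\gamma-C_4)$, which detects whether the minimum of $\phi$ (equivalently, the inflection point $y^\ast = 1-(\gamma/C_4)^{1/3}$ of the curve $P$) lies inside $(0,1)$.

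The three regimes follow by case analysis. If $C_3>\gamma$, then $\phi(0)<0<\phi(1^-)$, and convexity forces exactly one zero, hence one fold and bistability between the disease-free and endemic states. If $C_3<\gamma$ and $C_4<\gamma$, then $\phi(0)>0$ and $\phi'(0)>0$, so by convexity $\phi>0$ throughout, giving no fold and a monotone branch. If $C_3<\gamma$ and $C_4>\gamma$, then $\phi(0)>0$ but $\phi'(0)<0$, so $\phi$ has an interior minimum at $y^\ast$, and two folds appear once that minimum is negative. For the stability of the branches I would differentiate the identity $f(y,P(y))\equiv 0$ to obtain $f_y = -f_{C_2}P'(y)$ with $f_{C_2}=y(1-y)>0$; hence an equilibrium on the branch is stable iff $P'(y)>0$, so stability flips at every fold and the two-fold curve is stable--unstable--stable, producing two coexisting stable endemic states.

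I expect the delicate point to be the last regime. The condition $C_4>\gamma$ only guarantees that the inflection point / interior minimum of $\phi$ exists, not that $\phi(y^\ast)<0$; a direct computation gives $\phi(y^\ast)=C_4(3u-2)-C_3$ with $u=(\gamma/C_4)^{1/3}$, so confirming two genuine folds amounts to verifying that this quantity is negative. This sign check — and the implicit positivity and magnitude constraints on $C_3$ and $C_4$ under which it holds — is where the real work lies; the other two cases are robust consequences of convexity and the endpoint signs alone.
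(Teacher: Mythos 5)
Your proposal follows the paper's own route: the same local transcritical analysis at $C_2=\gamma$ (with $\partial_{yy}f(0,\gamma)=2(C_3-\gamma)$ setting the criticality), and the same global parametrisation of the nontrivial branch, $C_2=P(y)=\gamma/(1-y)-C_3y-C_4y^2$, with fold condition $C_3+2C_4y=\gamma/(1-y)^2$ and inflection condition $(1-y)^3=\gamma/C_4$. What you add is a genuine sharpening: observing that $\phi:=P'$ satisfies $\phi''(y)=6\gamma/(1-y)^4>0$, so that $\phi$ is strictly convex on $(0,1)$ and the fold count is at most two, and then reading off the regimes from $\phi(0)=\gamma-C_3$, $\phi(1^-)=+\infty$ and $\phi'(0)=2(\gamma-C_4)$. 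The paper's proof stops at recording the fold and inflection conditions and appeals to Figure~\ref{fig:pos_bif_diag_upto_sq} for the global shapes; your convexity argument and your stability bookkeeping along the branch via $f_y=-y(1-y)P'(y)$ make rigorous what the paper leaves implicit.

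The ``delicate point'' you flag in the third regime is real, and it is a gap in the paper's own argument rather than in yours. The hypotheses $C_3<\gamma<C_4$ guarantee only that the interior minimum of $\phi$ exists at $y^\ast=1-(\gamma/C_4)^{1/3}$, not that $\phi(y^\ast)<0$; your formula $\phi(y^\ast)=C_4(3u-2)-C_3=3\gamma^{1/3}C_4^{2/3}-2C_4-C_3$ with $u=(\gamma/C_4)^{1/3}$ is correct. Since $3\gamma^{1/3}C_4^{2/3}-2C_4<\gamma$ strictly for $C_4>\gamma$, the two-fold region is nonempty, but it does not exhaust the stated parameter region: for instance $\gamma=1$, $C_3=1/2$, $C_4=6/5$ gives $\phi(y^\ast)\approx 0.49>0$, so the branch is monotone with no fold at all even though $C_3<\gamma<C_4$. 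Hence the third bullet of the Proposition holds only under the additional inequality $C_3+2C_4>3\gamma^{1/3}C_4^{2/3}$ (automatically satisfied when $C_4$ is sufficiently large relative to $\gamma$ and $C_3$, as in the figure's parameter values, e.g.\ $C_4=8$, $C_3=1/2$, $\gamma=1$); the paper's inference ``$C_4>\gamma$ gives an inflection point, hence the pictured two-fold curve'' is not valid in general. Your reduction of the case to this single sign check is the correct completion of the proof, and, if anything, identifies a missing hypothesis in the statement itself.
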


\begin{figure}
     \centering
     \includegraphics[scale=0.4]{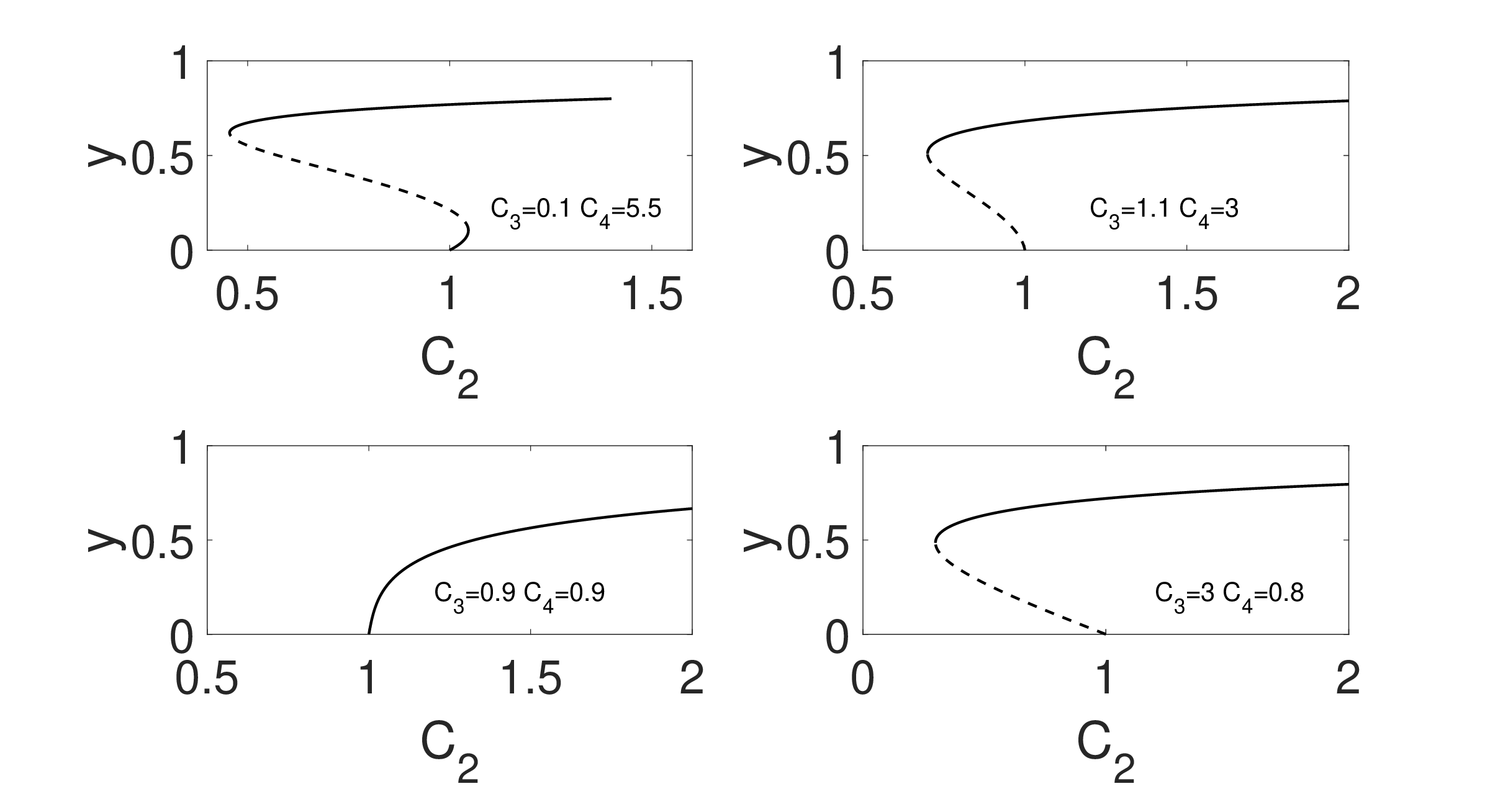}
     \caption{Possible bifurcation behaviour of the disease free steady state and emergent nontrivial equilibrium branches for~\eqref{eq:QuarticModel}. Parameter values are given in each panel and $\gamma=1$ for all cases.}
     \label{fig:pos_bif_diag_upto_sq}
\end{figure}

\begin{proof}
According to the general theorem about the transcritical bifurcation and using the sign conditions in Figure~\ref{fig:trans_crit_class}, we have forward transcritial bifurcation at $C_2=\gamma$, when $C_3<\gamma$. Similarly, the bifurcation is backward when $C_3>\gamma$. 

The global nature of the curve can be obtained in this simple case by expressing the parameter~$C_2$ in terms of~$y$ as
\[
C_2= \frac{\gamma}{1-y} -C_3 y - C_4 y^2.
\]
Differentiating $C_2$ with respect to $y$ yields that the bifurcation curve has a fold point, $C_2'(y)=0$, when $C_3 +2C_4 y = \gamma / (1-y)^2$. The shape of the bifurcation curve depends on the second derivative~$C_2''(y)$ as well. 
It has an inflection point when $\gamma /C_4 = (1-y)^3$. That is we have an inflection point when $C_4>\gamma$, leading to the bifurcation curves shown in Figure~\ref{fig:trans_crit_class} and completing the proof of the statement.
\end{proof}

We now use this result to determine the bifurcation behaviour for different networks.
Specifically, consider the one-dimensional equation~\eqref{eq:general_d_r_q_intuitive} which corresponds to equation~\eqref{eq:QuarticModel} with parameters $C_2=\tau d$, $C_3 = \beta r$, and  $C_4= \delta q$.
Fix the epidemic parameters such that
\[
\frac{1}{12} < \beta < \frac{1}{6} , \qquad \frac{1}{12} < \delta < \frac{1}{4}\,\,\, \text{and} \,\,\, \gamma=1.
\]

For the square lattice with diagonals on a torus we have $d=8$, $r=12$, $q=4$.
Thus, $C_3=12 \beta >1$ and $C_4=4 \delta <1$ and the bifurcation diagram is like that in the bottom right of Figure~\ref{fig:pos_bif_diag_upto_sq}. 
That is bistability occurs between the disease-free and the endemic steady state.

The triangle lattice on a torus is characterized by $d=6$, $r=6$, $q=0$. This implies $C_3=6 \beta <1$ and $C_4=0 <1$, and thus the bifurcation diagram is like that in the bottom left of Figure~\ref{fig:pos_bif_diag_upto_sq}.
That is, there is no bistability for this network. 

For the three dimensional cubic lattice with squares as hyperedges we have $d=6$, $r=0$, $q=12$. Thus, $C_3=0 <1$ and $C_4=12 \delta >1$. The bifurcation diagram is like that in the top left of Figure~\ref{fig:pos_bif_diag_upto_sq}. That is bistability may occur between two endemic steady states.

\section{Bifurcation regimes for two-population networks}
\label{sec:TwoPop}
Rather than considering a homogeneous population of nodes, we now shift the focus to networks that consist of two distinct populations, each consisting of topologically equivalent nodes.
More precisely, consider a network with two populations of nodes---illustrated in Figure~\ref{fig:two_pop_illustration}.
Within population~1, each node has $d_1$~edges and $h_1$~triangles as well as $d_2$~edges to nodes in population~2.
Within population~2, each node has $d_4$~edges and $h_2$~triangles as well as $d_3$~edges to nodes in population~1.
For homogeneous initial conditions, this results in the evolution of the probability of infection~$y_j$ of a node in population~$j$ given by
\begin{subequations}  
\label{individual based _2groups}
\begin{align}
\dot y_1 &= \tau (1-y_1) (d_1y_1 + d_2 y_2) +\beta h_1 (1-y_1) y_1^2-\gamma y_1 ,\\
\dot y_2 &= \tau (1-y_2) (d_3y_1 + d_4 y_2) +\beta h_2 (1-y_2) y_2^2 -\gamma y_2.
\end{align}
\end{subequations}
This is a specific case of the general model equations~\eqref{eq:ModelGen} for $M=2$ populations.

In this section, we consider the possible bifurcation diagrams in the $(\tau, I)$-plane, where $I=N_1 y_1 + N_2 y_2$ with  $N_i$~denoting the number of nodes in population $i$.
In contrast to one-population networks, where multistability requires the existence of 3-simplices (cf.~Section~\ref{sec:results}), we show that multistability is already possible with 2-simplices (triangles)---at the expense of a more complicated network structure.
We fist consider two symmetrically coupled populations and then analyse the dynamics if the symmetry is broken.

\subsection{Two symmetrically coupled populations}

Assume symmetric coupling, that is, $d_1 = d_4$, $d_2=d_3$, and $h_1=h_2$. 
Then the system has a $\Z_2$-symmetry that acts by permuting the indices of the populations.
The symmetry implies that the set where the states of both populations are equal, $\Delta := \{y = y_1 = y_2\}$, is dynamically invariant; cf.~\cite{Golubitsky2002}.
Thus, to analyse the dynamics of~\eqref{individual based _2groups}, we can first consider the dynamics on~$\Delta$ and then the dynamics transverse to~$\Delta$.

\subsubsection{Dynamics and bifurcations in symmetric systems}

To simplify notation, we will first consider the system
\begin{subequations}\label{eq:SymmetricPop}
\begin{align}
    \dot y_1 &= C_1y_1 + C_1'y_2 + C_2y_1^2 + C_2'y_1y_2 + C_3 y_1^3\\
    \dot y_2 &= C_1y_2 + C_1'y_1 + C_2y_2^2 + C_2'y_2y_1 + C_3 y_2^3.
\end{align}
\end{subequations}
with a polynomial vector field. Note that the SIS model~\eqref{individual based _2groups} is a special case with
$C_1 = \tau d_1-\gamma $,
$C_1' = \tau d_2 $,
$C_2 = -\tau d_1+\beta h_1 $,
$C_2' = -\tau d_2$, and
$C_3 = -\beta h_1$.
Note that any point in~$\Delta$ is a \emph{symmetric state}, that is, the state of both populations take the same value.
Any point in $\R^2\smallsetminus\Delta$ we refer to as an \emph{asymmetric state}.

First, we will consider the dynamics of symmetric states within~$\Delta$; these behave like a single population (cf.~Section~\ref{sec:OnePopTriangles}).
Specifically, the dynamics on~$\Delta$ evaluate to
\begin{align}
\label{eq:DynDelta}
    \dot y &= (C_1+C_1')y + (C_2+C_2')y^2 + C_3 y^3 .
\end{align}
If $C_2+C_2'\neq 0$, the trivial equilibrium $y = 0$ undergoes a transcritical bifurcation at $C_1+C_1' = 0$ within~$\Delta$.
The emergent branch of nontrivial equilibria is determined by the quadratic equation
\begin{equation}
C_1+C_1' + (C_2+C_2')y + C_3 y^2 = 0.\label{nontriv_ss}
\end{equation}
Depending on the parameter values there may or may not be a fold point for $y>0$.

Second, we can determine dynamics and bifurcations close to these equilibria transverse to~$\Delta$ to get insights into the full dynamics of~\eqref{eq:SymmetricPop}.
Linear stability at any point $(y_1, y_2)= (y,y)\in\Delta$ is given by
\begin{equation}
    A^{\Delta} = \left(\begin{array}{cc}
        C_1+(2C_2+C_2')y+3C_3y^2&  C_1'+C_2'y\\
        C_1'+C_2'y & C_1+(2C_2+C_2')y+3C_3y^2
    \end{array}\right).
\end{equation}
Naturally, the symmetry constrains the eigenvalues and eigenvectors of~$A^{\Delta}$ given by
\begin{subequations}
\begin{align}
\label{eq:ParEV}
    \lambda^\parallel(y) &= C_1+(2C_2+C_2')y+3C_3y^2+C_1'+C_2'y, & v^\parallel &= (1,1)^T\\
    \lambda^\perp(y) &= C_1+(2C_2+C_2')y+3C_3y^2-C_1'-C_2'y, & v^\perp &= (1,-1)^T.
\label{eq:TransEV}    
\end{align}
\end{subequations}
The first eigenvalue corresponds to linear stability with respect to perturbations within~$\Delta$ that reduce to the one-dimensional system discussed above.
The second eigenvalue on the other hand corresponds to linear stability with respect to perturbations transverse to~$\Delta$.
Its zeros indicate bifurcations that give rise to nontrivial equilibria off~$\Delta$.
Due to the symmetry, such bifurcations are generically of pitchfork type.

We can compare the two eigenvalues to get insight into the relationship of bifurcations within and transverse to~$\Delta$.
Note that for $0\leq y \leq 1$, the sign of~$C_1'+C_2'y$ determines whether $\lambda^\parallel \leq \lambda^\perp$ or $\lambda^\parallel \geq \lambda^\perp$. 
Specifically, if $C_1'+C_2'y=\tau d_2 (1-y)\geq 0$, we have
$\lambda^\perp \leq \lambda^\parallel$ and the stability of any equilibrium within~$\Delta$ ($\lambda^\parallel<0$) implies stability in the full system~\eqref{eq:SymmetricPop}.
This also means that any stable equilibrium within~$\Delta$ first loses stability within~$\Delta$ (in a transcritical bifurcation) before it can lose stability in the transverse direction.

\subsubsection{Equilibria and bifurcations of symmetric states}

The first step to gain insights into the SIS dynamics~\eqref{individual based _2groups} is to analyze the dynamics and bifurcations within the invariant subspace~$\Delta$.
The trivial equilibrium undergoes a transcritical bifurcation at $0= \tau( d_1+d_2)-\gamma = C_1+C_1' = 0$ or
\[
\tau = \frac{\gamma}{d_1+d_2}
\]
if $C_2+C_2' = \beta h_1-\gamma \neq 0$.
Moreover, the sign of $\beta h_1-\gamma$ (the quadratic term in~\eqref{eq:DynDelta}) determines the type of transcritical bifurcation: 
It is of forward type when $\gamma > \beta h_1$ and it is of backward type for $\gamma < \beta h_1$.

The nontrivial branch of equilibria~\eqref{nontriv_ss} is given by
\begin{equation}
\tau = \frac{\gamma}{(d_1+d_2)(1-y)} - \frac{\beta h_1}{d_1+d_2}y . \label{tau_y_curve}
\end{equation}
We can see that $\tau \to \infty$ as $y\to 1$.
As the solution to a quadratic equation, the type of bifurcation of the trivial equilibrium also determines bistability: There is a fold point for $0<y<1$ if  $\gamma < \beta h_1$. 
In this case, bistability occurs, i.e., for certain values of $\tau$ the larger endemic steady state is stable together with the trivial steady state $y=0$. 
The following theorem summarizes the results about the bifurcation diagram of the symmetric model.

\begin{theorem}
	Transcritical bifurcation occurs in two symmetric populations~\eqref{eq:SymmetricPop} at $\tau = \gamma/(d_1+d_2)$. The shape of the bifurcation curve of non-trivial steady states depends on the parameters that determine the higher-order interactions as follows.
	\begin{itemize}
		\item If $ \beta h_1 < \gamma$, then the $(\tau, y)$ bifurcation diagram is of forward type and the non-trivial steady state is stable. 
		\item If $\gamma < \beta h_1$, then the $(\tau, y)$ bifurcation diagram is of backward type and it has a fold point. The non-trivial steady state on the lower branch (below the fold point) is unstable, while the non-trivial steady state on the upper branch (above the fold point) is stable.
	\end{itemize}
\end{theorem}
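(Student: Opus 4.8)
The plan is to reduce the analysis to the one-dimensional dynamics on the invariant diagonal~$\Delta$ and then use the ordering of the two eigenvalues to upgrade stability within~$\Delta$ to stability in the full planar system. First I would substitute the SIS coefficients into the diagonal dynamics~\eqref{eq:DynDelta}, obtaining $\dot y = (\tau(d_1+d_2)-\gamma)y + (\beta h_1 - \tau(d_1+d_2))y^2 - \beta h_1 y^3$. This is precisely the cubic vector field of Proposition~\ref{lemma:2_simplices} once its coefficients (there denoted $C_2$ and $C_3$) are read as $\tau(d_1+d_2)$ and $\beta h_1$. That proposition then delivers the transcritical bifurcation at $\tau(d_1+d_2)=\gamma$, i.e.\ $\tau=\gamma/(d_1+d_2)$, together with the forward/backward dichotomy: forward when $\beta h_1<\gamma$, and backward with a fold point in $0<y<1$ when $\gamma<\beta h_1$.

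Second I would pin down the stability of the nontrivial branch within~$\Delta$. Writing the diagonal field as $f(y)=y\,g(y)$ with $g(y)=(C_1+C_1')+(C_2+C_2')y+C_3y^2$, a nontrivial equilibrium $y^\ast>0$ satisfies~\eqref{nontriv_ss}, so $g(y^\ast)=0$ and hence $f'(y^\ast)=y^\ast g'(y^\ast)$; since the tangential eigenvalue satisfies $\lambda^\parallel(y)=f'(y)$ on~$\Delta$, its sign agrees with that of $g'(y^\ast)$ for $y^\ast>0$. Differentiating the branch relation~\eqref{tau_y_curve} implicitly gives $\tau'(y)=-g'(y)/[(d_1+d_2)(1-y)]$, so $\operatorname{sign}(\lambda^\parallel)=-\operatorname{sign}(\tau'(y))$ on $0<y<1$. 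A direct computation yields $\tau'(y)=\frac{1}{d_1+d_2}\bigl[\gamma(1-y)^{-2}-\beta h_1\bigr]$. In the forward case $\tau'>0$ throughout (the branch is stable within~$\Delta$), whereas in the backward case $\tau'$ vanishes at the fold $(1-y_f)^2=\gamma/(\beta h_1)$, being negative on the lower branch $0<y<y_f$ (unstable within~$\Delta$) and positive on the upper branch $y_f<y<1$ (stable within~$\Delta$).

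Third I would lift these conclusions to the full system~\eqref{eq:SymmetricPop} using the transverse eigenvalue~\eqref{eq:TransEV}. Since $\lambda^\parallel(y)-\lambda^\perp(y)=2(C_1'+C_2'y)=2\tau d_2(1-y)\ge 0$ for $0\le y\le 1$, we have $\lambda^\perp\le\lambda^\parallel$. Consequently $\lambda^\parallel<0$ forces $\lambda^\perp<0$, so any equilibrium stable within~$\Delta$ is asymptotically stable in the plane, while wherever $\lambda^\parallel>0$ there is already an unstable direction. This confirms stability of the forward branch and of the upper branch, and instability of the lower branch, as claimed.

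The main obstacle is this last step: the stability assertions are statements about the two-dimensional system, not merely about its one-dimensional reduction, so I must rule out a transverse instability preceding the tangential one. The inequality $\lambda^\perp\le\lambda^\parallel$—which hinges on the coupling term $C_1'+C_2'y=\tau d_2(1-y)$ being nonnegative on the physically relevant interval $0\le y\le 1$—is exactly what guarantees that the stability of symmetric states is governed entirely by the diagonal dynamics and that no transverse pitchfork can destabilize a within-$\Delta$-stable symmetric equilibrium. Everything else is the routine cubic bookkeeping already established in Proposition~\ref{lemma:2_simplices}.
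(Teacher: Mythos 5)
Your proposal is correct and follows essentially the same route as the paper: restrict to the invariant diagonal~$\Delta$, recognise the reduced cubic as the one-population model of Proposition~\ref{lemma:2_simplices} with $C_2=\tau(d_1+d_2)$ and $C_3=\beta h_1$ (giving the transcritical point $\tau=\gamma/(d_1+d_2)$, the forward/backward dichotomy, the branch~\eqref{tau_y_curve} and its fold), and then lift stability to the full planar system via the eigenvalue ordering $\lambda^\perp\le\lambda^\parallel$ coming from $C_1'+C_2'y=\tau d_2(1-y)\ge 0$. Your explicit bookkeeping $\operatorname{sign}(\lambda^\parallel)=-\operatorname{sign}(\tau'(y))$ along the branch merely makes precise the branch-stability assertions the paper leaves implicit, so this is a presentational refinement rather than a different argument.
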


\subsubsection{Pitchfork bifurcation towards non-symmetric steady states}

In a second step, we now analyse potential bifurcation points of the nontrivial branch of symmetric equilibria~\eqref{tau_y_curve} within~$\Delta$ in the transverse direction.
For the SIS dynamics, the bifurcation condition for the general expression of the transverse eigenvalue~\eqref{eq:TransEV} evaluates to
\begin{equation}
   \tau(d_1-d_2)-\gamma+2(\beta h_1-\tau d_1)y-3\beta h_1 y^2 = 0.\label{pitchfork}
\end{equation}
Eliminating $\tau$ from~\eqref{tau_y_curve} and~\eqref{pitchfork} gives that the
$y$-value of the pitchfork point has to satisfy
\begin{equation}
f(y) := 2\gamma d_2 + (d_1-d_2)\gamma y -(d_1+3d_2)\beta h_1y (1-y)^2 = 0.\label{eqn_fy}
\end{equation}
The goal now is to find a solution $y\in [0,1]$ for which~\eqref{tau_y_curve} yields a positive value of $\tau$, i.e., $\gamma - \beta h_1 y (1-y) >0$. 

Depending on the values of the parameters, the function~$f$ may have zero or two roots in the interval $[0,1]$.
At the endpoints of the interval, we have $f(0)=2\gamma d_2 >0$ and $f(1)=\gamma (d_1 +d_2)>0$. 
Then the bifurcation conditions for the appearance of two roots is $f(y)=0$ and $f'(y)=0$, that is,
\begin{align}
\label{eq:FrstEq}
2\gamma d_2 + (d_1-d_2)\gamma y -(d_1+3d_2)\beta h_1y (1-y)^2& = 0,\\
(d_1-d_2)\gamma  -(d_1+3d_2)\beta h_1(1-3y)(1-y)& = 0 .
\label{eq:SndEq}
\end{align}
Dividing both equations by~$d_2$ we can observe that the single parameter $d=d_1/d_2$ determines the existence of the pitchfork bifurcation point instead of the two parameters~$d_1$ and~$d_2$.
Dividing the first equation~\eqref{eq:FrstEq} by the second~\eqref{eq:SndEq}, we can express the new parameter~$d$ in terms of~$y$ as
\[
\frac{d_1}{d_2} = d=\frac{1-3y+y^2}{y^2} .
\]
Then substituting this expression into the second equation~\eqref{eq:SndEq}, one can determine the other parameters as
\[
\beta h_1= \frac{\gamma}{(1-3y+4y^2)(1-y)} .
\]
It is easy to check that the bifurcation curve, parameterised by the above two equations is a monotone in the $(d,h_1)$ plane. 
Pitchfork bifurcation points, and hence symmetry breaking bifurcation, occur along this curve. 
The implicit relation between~$d$ and~$h_1$ given by the parametrisation in terms of~$y$ can also be expressed as a direct relation between the two variables, namely  we can introduce a function $S$ such that $S(d)=h_1$.
Using this curve, we can complete the characterisation of the bifurcation diagrams in the $(\tau, y)$-plane as it is presented in the theorem below.

\begin{theorem} \label{theo:two_pop_bif}
The parameter plane $(d_1/d_2 , h_1)$ can be divided according to the shape of the $(\tau, y)$ bifurcation diagram as follows. 
\begin{itemize}
	\item If $ h_1 < \gamma/\beta$, then the $(\tau, y)$ bifurcation diagram is of forward type and there is no pitchfork point. 
	\item If $\gamma/\beta < h_1 < S(d_1/d_2)$, then the $(\tau, y)$ bifurcation diagram is of backward type and there is no pitchfork point. 
	\item If $ S(d_1/d_2) < h_1$, then  the $(\tau, y)$ bifurcation diagram is of backward type and there are two pitchfork points, and two branches of non-symmetric solutions.
\end{itemize}
\end{theorem}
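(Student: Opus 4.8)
The plan is to split the statement into its two independent ingredients: the \emph{type} of the symmetric bifurcation (forward versus backward), and the \emph{number} of symmetry-breaking pitchfork points on the non-trivial symmetric branch. The first ingredient is already in hand from the analysis on~$\Delta$: the sign of the quadratic coefficient $\beta h_1-\gamma$ makes the transcritical bifurcation at $\tau=\gamma/(d_1+d_2)$ forward precisely when $h_1<\gamma/\beta$ and backward when $h_1>\gamma/\beta$. So everything reduces to counting the roots $y\in(0,1)$ of the function $f$ from~\eqref{eqn_fy} that lie on the admissible branch, and locating the threshold at which this count jumps.

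My main device is to solve $f(y)=0$ for the single combination $\beta h_1$, writing it as $\beta h_1=G(y)$ with
\[
G(y) := \frac{2\gamma d_2 + (d_1-d_2)\gamma y}{(d_1+3d_2)\,y(1-y)^2},
\]
which depends on $(d_1,d_2)$ only through $d=d_1/d_2$. Pitchfork points then correspond to intersections of the horizontal line at height $\beta h_1$ with the graph of $G$ over $(0,1)$. I would first record that the numerator is positive on $(0,1)$, so $G>0$, and that $G(y)\to+\infty$ as $y\to0^+$ and as $y\to1^-$, so $G$ attains an interior minimum. The key step is to show $G$ is \emph{unimodal}: a direct computation reduces $G'(y)=0$ to the quadratic $(d-1)y^2+3y-1=0$, equivalently $d=(1-3y+y^2)/y^2$, matching the parametrisation already derived. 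A sign analysis (the quadratic equals $-1$ at $y=0$ and $d+1>0$ at $y=1$, and its second root lies outside $(0,1)$) shows there is exactly one critical point $y^\ast\in(0,1)$, necessarily the minimum. Unimodality then yields the dichotomy: $\beta h_1<G(y^\ast)$ gives no root, $\beta h_1=G(y^\ast)$ one tangential root, and $\beta h_1>G(y^\ast)$ exactly two roots, with the threshold defining $\beta S(d)=G(y^\ast)$.

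It remains to check admissibility and the ordering of thresholds. Substituting the double-root relation $\beta h_1=\gamma/[(1-3y+4y^2)(1-y)]$ into~\eqref{tau_y_curve} and simplifying, the positivity condition $\gamma-\beta h_1 y(1-y)>0$ becomes $\gamma(1-2y)^2/(1-3y+4y^2)>0$, which holds since the relevant critical values satisfy $y^\ast<\tfrac12$; this confirms $\tau>0$, so the pitchfork points are genuine and lie on the admissible symmetric branch. The same relation gives $\beta S(d)>\gamma$, i.e. $S(d)>\gamma/\beta$, so the three parameter regions are correctly nested, and since $d(y)$ is a strictly monotone bijection on the admissible $y$-range, inverting it yields a well-defined monotone $S$. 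Assembling the pieces: $h_1<\gamma/\beta$ gives forward type with $\beta h_1<\beta S(d)$ hence no pitchfork; $\gamma/\beta<h_1<S(d)$ gives backward type with still no pitchfork; and $h_1>S(d)$ gives backward type with two pitchfork points, producing the two symmetry-related branches of non-symmetric equilibria.

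The main obstacle I anticipate is the unimodality of $G$ (equivalently, uniqueness of the interior critical point): without it the root count could in principle exceed two or behave non-monotonically in $h_1$, and the clean three-region picture would fail. Fortunately this collapses to the single quadratic $(d-1)y^2+3y-1=0$, so the real work is the careful sign bookkeeping there together with the verification that the admissible root $y^\ast$ always falls in the sub-interval $\bigl(0,(3-\sqrt5)/2\bigr)$ where $d>0$ and $\tau>0$.
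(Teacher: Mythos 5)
Your argument is correct and rests on the same algebra as the paper's, but it organises the decisive root-counting step differently. The paper characterises the boundary between zero and two pitchfork points by the tangency system $f(y)=0$, $f'(y)=0$ for the cubic in~\eqref{eqn_fy}, solves it parametrically as $d=(1-3y+y^2)/y^2$, $\beta h_1=\gamma/\bigl((1-3y+4y^2)(1-y)\bigr)$, and then merely asserts that the resulting curve is monotone in the $(d,h_1)$-plane so that $S$ is well defined; the claim ``zero or two roots'' is inferred only from $f(0)>0$ and $f(1)>0$. You instead solve $f(y)=0$ for $\beta h_1=G(y)$ and prove that $G$ is unimodal on $(0,1)$: your critical-point quadratic $(d-1)y^2+3y-1=0$ is exactly the paper's parametrisation $d=(1-3y+y^2)/y^2$ rearranged, and one checks that $G(y^{*})=\gamma/\bigl((1-3y^{*}+4(y^{*})^2)(1-y^{*})\bigr)$, so your threshold $\beta S(d)=\min_{y\in(0,1)}G(y)$ reproduces the paper's curve. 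What this buys: an actual proof of the $0$-versus-$2$ dichotomy (which the paper only states), a direct definition of $S$ that does not require inverting the parametric curve, and the explicit verifications ($\tau>0$ at the double root via $\gamma(1-2y)^2/(1-3y+4y^2)$, and $S(d)>\gamma/\beta$) that the paper gestures at but never carries out.

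One place where your wording outruns your argument---an imprecision you share with, and indeed inherit from, the paper: you verify $\tau>0$ only at the double root and then claim all pitchfork points are admissible. At a \emph{general} root of $f$ one computes
\[
\gamma-\beta h_1\,y(1-y)=\gamma\,\frac{(d_1+d_2)(1-2y)}{(d_1+3d_2)(1-y)},
\]
so admissibility is exactly $y<\tfrac12$; and since $G(\tfrac12)=4\gamma$ independently of $d$, the upper intersection migrates past $y=\tfrac12$ (hence to $\tau<0$) as soon as $\beta h_1>4\gamma$. Your continuity argument therefore establishes the third bullet for $S(d_1/d_2)<h_1<4\gamma/\beta$---which covers the paper's numerical example, where $\beta h_1=3.85<4\gamma$---but for larger $h_1$ the ``two pitchfork points'' claim requires either the extra restriction $\beta h_1\le 4\gamma$ or admitting formal $\tau<0$. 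Since the paper's own proof never checks the sign of $\tau$ at any pitchfork point, this is a shared gap in the theorem's statement rather than a defect specific to your proof; your framework, via the observation $G(\tfrac12)=4\gamma$, is actually the natural place to repair it.
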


\subsubsection{Nonsymmetric solutions of the symmetric equation}

Can there be bistability between two nonsymmetric endemic steady states for the SIS dynamics~\eqref{individual based _2groups}?
The pitchfork bifurcations discussed above give rise to nonsymmetric steady states, but these may not necessarily be stable close to the bifurcation point.

Indeed, the first result shows that equilibria emerging in the pitchfork bifurcations are unstable. 
Recall that for equilibria $y\in\Delta$ with $0\leq y < 1$ the eigenvalues of the linearization satisfy $\lambda^\perp \leq \lambda^\parallel$.
This means that the transverse bifurcation condition $\lambda^\perp = 0$ can only be satisfied if~$\lambda^\parallel > 0$.
In summary we have

\begin{proposition}
    For the SIS dynamics~\eqref{individual based _2groups}, any pitchfork bifurcation of the symmetric branch of equilibria~\eqref{tau_y_curve} within~$\Delta$ is subcritical. 
    That is, the emerging branch of nonsymmetric equilibria are unstable near the bifurcation point.
\end{proposition}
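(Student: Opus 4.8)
The plan is to exploit the eigenvalue ordering $\lambda^\perp \le \lambda^\parallel$ established earlier: I will show that at any transverse bifurcation the parallel eigenvalue is strictly positive, and that this positive eigenvalue persists along the bifurcating branch, so the nonsymmetric equilibria cannot be stable close to the bifurcation.

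First I would locate the bifurcation. A symmetry-breaking bifurcation of the nontrivial symmetric branch~\eqref{tau_y_curve} occurs precisely where the transverse eigenvalue vanishes, $\lambda^\perp(y^*)=0$, i.e.\ at a point $(\tau^*,y^*)$ with $0\le y^*<1$ satisfying~\eqref{pitchfork}. Because~\eqref{eq:SymmetricPop} is $\Z_2$-equivariant and $\lambda^\perp$ crosses zero simply there, the equivariant branching lemma (cf.~\cite{Golubitsky2002}) guarantees a continuous branch of nonsymmetric equilibria $(y_1(s),y_2(s))$, tangent to $v^\perp=(1,-1)^T$, with $(y_1(s),y_2(s))\to(y^*,y^*)$ as $s\to 0$.

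Next I would extract the sign of $\lambda^\parallel$ at $(\tau^*,y^*)$. Using $C_1'+C_2'y=\tau d_2(1-y)$ from the SIS parametrization, the difference of eigenvalues is $\lambda^\parallel(y^*)-\lambda^\perp(y^*)=2\tau^* d_2(1-y^*)$, which is strictly positive for $0\le y^*<1$ and $\tau^*,d_2>0$. Combined with $\lambda^\perp(y^*)=0$ this yields $\lambda^\parallel(y^*)>0$; that is, the symmetric equilibrium at the pitchfork point is already unstable within~$\Delta$. The decisive step is then a continuity argument: the Jacobian of~\eqref{eq:SymmetricPop} is polynomial, hence continuous in $(y_1,y_2,\tau)$, and at the bifurcation point it reduces to $A^{\Delta}(y^*)$ with spectrum $\{\lambda^\parallel(y^*),0\}$. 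Since $\lambda^\parallel(y^*)$ is bounded away from $0$, the two limiting eigenvalues are distinct; hence for $s$ small the Jacobian evaluated at $(y_1(s),y_2(s))$ has one eigenvalue in an arbitrarily small neighbourhood of $\lambda^\parallel(y^*)>0$ and one near $0$. The former stays strictly positive, so every nonsymmetric equilibrium sufficiently near the bifurcation carries an eigenvalue with positive real part and is therefore unstable, which is the claim.

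The hard part, such as it is, will be the bookkeeping in the continuity argument rather than any deep computation: one must confirm that the positive limiting eigenvalue persists as a genuine eigenvalue of the $2\times2$ Jacobian along the branch and does not merge with the near-zero eigenvalue. This is immediate here because the two limits $\lambda^\parallel(y^*)>0$ and $0$ are separated, so they remain separated under small perturbations. Notably, no centre-manifold reduction or evaluation of the pitchfork normal-form coefficient is required, since instability is forced by the parallel direction alone and is independent of whether the transverse branch opens forward or backward.
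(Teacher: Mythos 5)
Your proposal is correct and follows essentially the same route as the paper: both arguments rest on the ordering $\lambda^\perp \leq \lambda^\parallel$ with strict gap $\lambda^\parallel-\lambda^\perp = 2\tau d_2(1-y) > 0$ for $0\leq y<1$, so the transverse bifurcation condition $\lambda^\perp=0$ forces $\lambda^\parallel>0$ and the emerging nonsymmetric equilibria inherit a positive eigenvalue near the bifurcation. The only difference is presentational: you spell out the eigenvalue-continuity step (and invoke equivariant branching for existence of the branch) that the paper leaves implicit.
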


Thus, secondary bifurcations along the branch are necessary for bistability between nonsymmetric equilibria to emerge.
To get insight where these bifurcations may appear, we compute the branch of asymmetric equilibria in the $(\tau, y_1)$-plane. 
Specifically, equilibria of~\eqref{individual based _2groups} for symmetric coupling satisfy
\begin{subequations}  \label{individual based _2groups_symm_ss}
	\begin{align}
	\tau (d_1y_1 + d_2 y_2) &=  \frac{\gamma y_1}{1-y_1} - \beta h_1 y_1^2 ,\\
	\tau (d_2y_1 + d_1 y_2) &=  \frac{\gamma y_2}{1-y_2} - \beta h_1 y_2^2 .
	\end{align}
\end{subequations}
Dividing the first equation by the second one, we can eliminate~$\tau$ and obtain a relation between~$y_1$ and~$y_2$ as
\[
(d_1y_1 + d_2 y_2) \left( \frac{\gamma y_2}{1-y_2} - \beta h_1 y_2^2 \right) =  (d_2y_1 + d_1 y_2) \left( \frac{\gamma y_1}{1-y_1} - \beta h_1 y_1^2 \right) .
\]
This can be rearranged to
\[
\gamma \frac{d_2(y_1+y_2) + (d_1-d_2)y_1y_2}{(1-y_1)(1-y_2)} = \beta h_1 \left( (d_1-d_2)y_1y_2 + d_2 (y_1+y_2)^2 \right) .
\]
Dividing this equation by $\gamma d_2$ and writing $d=d_1/d_2$, $B= \beta h_1/\gamma$, $x=(y_1+y_2)/2$, and $z=y_1y_2$, yields
\[
\frac{2x + (d-1)z}{1-2x+z} = B ((d-1)z +4x^2).
\]

\begin{figure}
    \centering
    \includegraphics[width=0.7\linewidth]{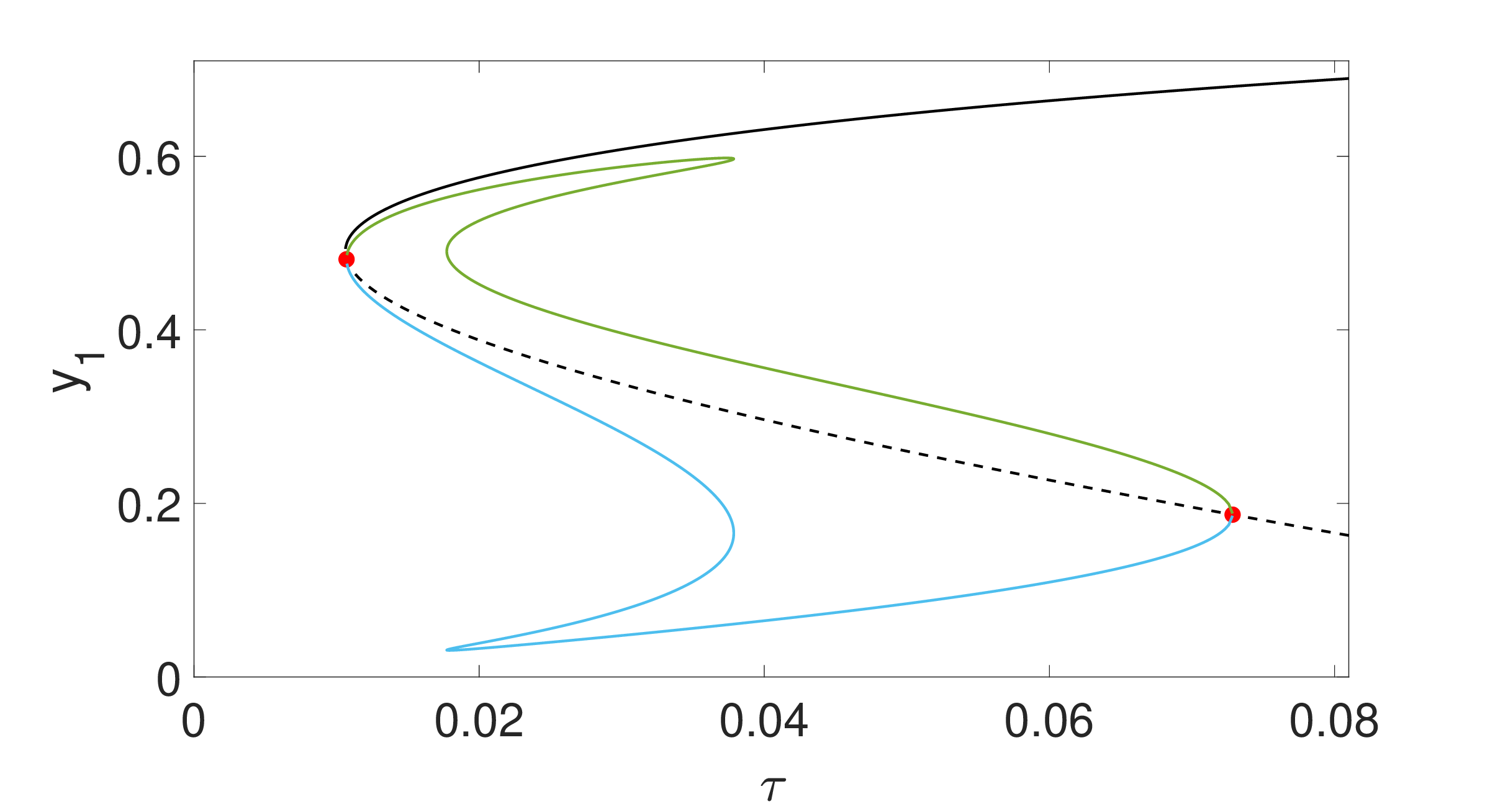}
    \caption{
    Branches of equilibria in the ($\tau, y_1$)-plane for two symmetric populations~\eqref{eq:SymmetricPop} with parameters $d_1=4$, $d_2=3$, $h=1$, $\beta=3.85$, and $\gamma=1$.
    The black line corresponds to the symmetric equilibrium branch in~$\Delta$; a solid line corresponds to a stable equilibrium, a dashed line to an unstable solution.
    Change of transverse stability occurs at pitchfork bifurcations indicated by red dots. 
    The emergent asymmetric steady states are depicted by green and blue curves (stability not shown).
    }
    \label{fig:two_pop_bif_curve}
\end{figure}

This now enables us to express~$y_2$ in terms of~$y_1$ and compute the~$(\tau, y_1)$ branch parametrised by~$x$. 
Specifically, choosing a value of $x$ (the proper interval where~$x$ is varied will be defined later), the value of~$z$ can be determined from the above equation which is a quadratic one in~$z$. 
Once~$x$ and~$z$ are given, the values of $y_1$ and $y_2$ are determined from the system below
\[
y_1+y_2 = 2x, \qquad y_1y_2 = z,
\]
as
\[
y_{1,2} = x \pm \sqrt{x^2-z} .
\]
It is easy to check that both roots are real and are in the interval $[0,1]$ if and only if
\[
2x-1< z < x^2 <1 \quad \mbox{and} \quad x>0 \quad \mbox{and} \quad z>0 
\]
hold. 
Hence~$x$ is varied in the interval $[1/2 , 1]$ and this is restricted further by the upper and lower bound conditions on~$z$. 

As these equations are typically not accessible analytically, we can compute the $(\tau, y_1)$ solution curve numerically for fixed parameters~$d_1$, $d_2$, $h$, $\beta$ and $\gamma$.
Here we show a concrete example rather than attempting a full characterization.
Figure~\ref{fig:two_pop_bif_curve} shows a situation corresponding to the third case in Theorem~\ref{theo:two_pop_bif}, when there are two pitchfork points. 
The non-symmetric steady states are computed by using the algorithm detailed above. 
Numerical computation shows that the branch of non-symmetric steady states connects the two pitchfork points, as it is shown in Figure~\ref{fig:two_pop_bif_curve}.
Moreover, it turns out that this branch might have an $S$-shape, that is for certain values of~$\tau$ we can have six non-symmetric steady states. 
Hence the total number of steady states is nine. 
Two symmetric steady states are stable, and one is unstable. 
Four of the non-symmetric ones are saddles and two of them are stable. 
The phase portrait is shown in the top panel of Figure~\ref{fig:symmetric_phase_time_evol}. 
As we can see, there are four stable steady states (blue dots), and the boundary between their basins of attraction are formed by the stable manifolds of the saddle points (shown with magenta dots).  
The time dependence is shown in the bottom panel of Figure~\ref{fig:symmetric_phase_time_evol}, where four solutions starting from the four different basins are plotted.
The solutions are denoted by the letter corresponding to the stable steady state to which they converge.

\begin{figure}
    \centering
    \includegraphics[scale=0.3]{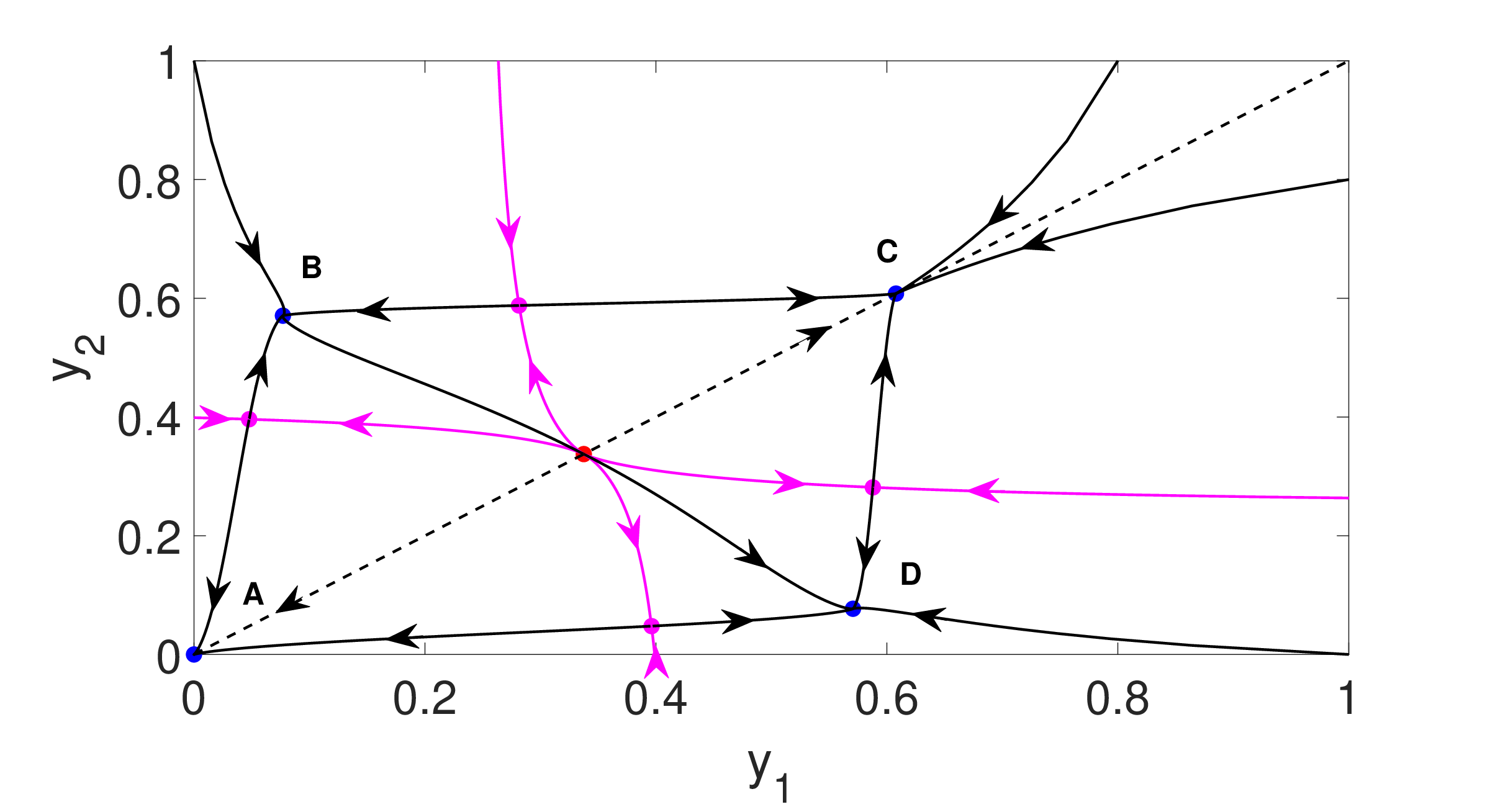}
\includegraphics[scale=0.25]{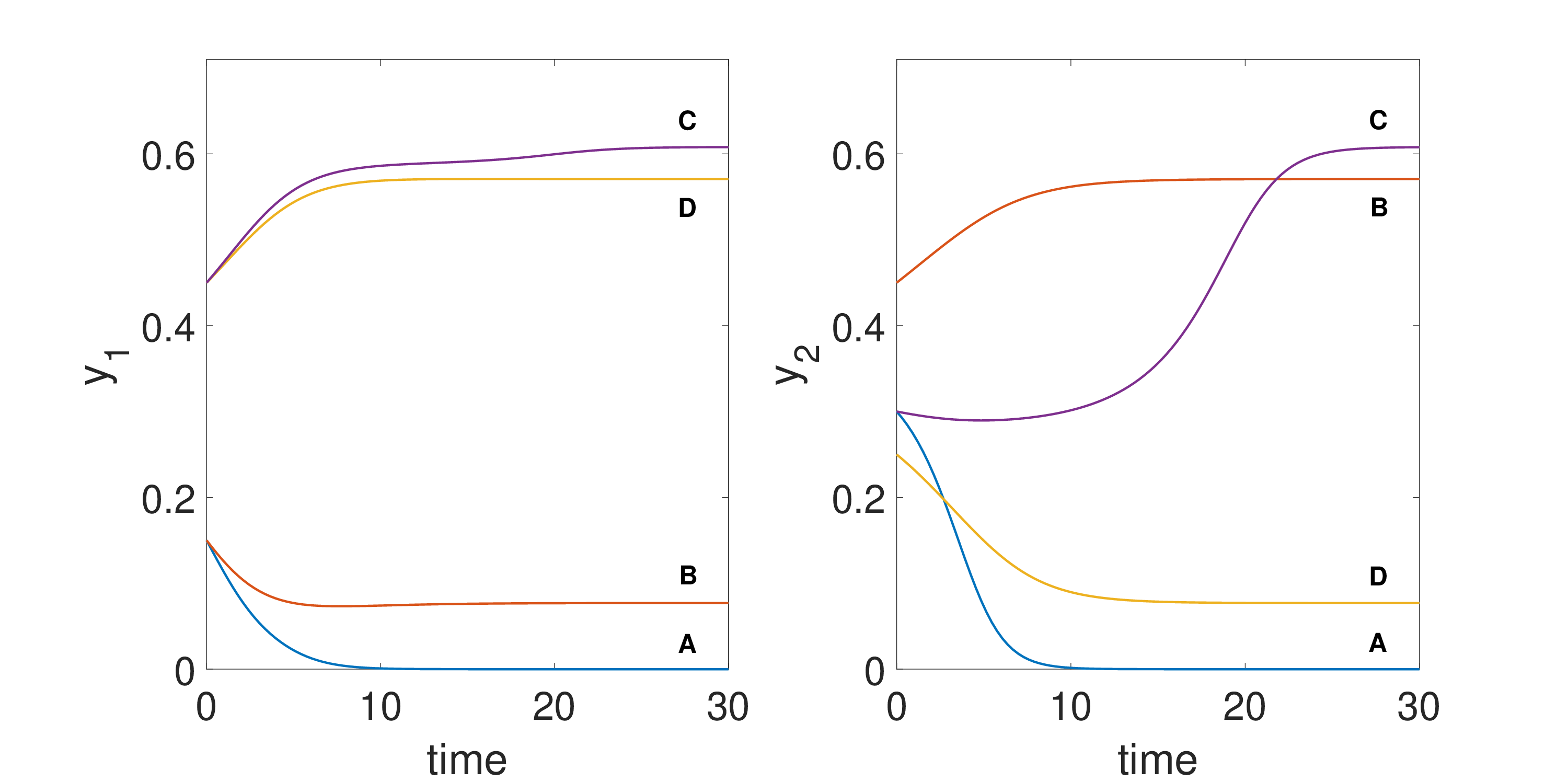}
    \caption{Illustration of the behaviour of the symmetric system~\eqref{eq:SymmetricPop} with parameters $d_1=4$, $d_2=3$, $\gamma=1$, $\beta=3.85$, $h=1$, and $\tau=0.03$. 
    Top: Phase plane with stable equilibria~A, B, C, D (blue dots), unstable symmetric equilibrium (red dot, and four additonal (unstable) saddle equilibria (magenta dots). 
    Bottom: Time evolution of~$y_1$ and~$y_2$ of solutions converging to one of the four labelled stable equilibria.}
\label{fig:symmetric_phase_time_evol}
\end{figure}

\subsection{Two coupled populations with general coupling}

Now we return to the SIS dynamics~\eqref{individual based _2groups} for a general network without forcing symmetry.
Specifically, we consider perturbations away from symmetric coupling to gain insight into the dynamics and bifurcations.
This allows to exploit the fact that the conditions for bistability remain approximately true as the invariant subspace breaks. 
Indeed, the hyperbolic part of the branches are preserved while the pitchfork bifurcations may split into bifurcations that appear more generically (saddle-node/fold bifurcations).

As before, we can compute the steady states of the system. For~\eqref{individual based _2groups}, these satisfy
\begin{subequations}  \label{individual based _2groups_ss}
	\begin{align}
	\tau (d_1y_1 + d_2 y_2) &=  \frac{\gamma y_1}{1-y_1} - \beta h_1 y_1^2 ,\\
	\tau (d_3y_1 + d_4 y_2) &=  \frac{\gamma y_2}{1-y_2} - \beta h_2 y_2^2 .
	\end{align}
\end{subequations}
Dividing the first equation by the second one, we can eliminate~$\tau$ and obtain a relation between~$y_1$ and~$y_2$ as
\[
(d_1y_1 + d_2 y_2) \left( \frac{\gamma y_2}{1-y_2} - \beta h_2 y_2^2 \right) =  (d_3y_1 + d_4 y_2) \left( \frac{\gamma y_1}{1-y_1} - \beta h_1 y_1^2 \right) .
\]
This can be rearranged to
\begin{subequations}
\begin{align} \label{eq:y1y2}
&\gamma \left( d_1-d_4) y_1y_2 +d_2 y_2^2- d_3 y_1^2 + (d_3-d_1)y_1^2y_2+  (d_4-d_2) y_1y_2^2 \right) \\ &\qquad= 
\beta \left( h_2d_2y_2^3 -h_1d_3y_1^3 + h_2d_1y_1y_2^2 - h_1d_4y_1^2y_2  \right) (1-y_1)(1-y_2).  &
\end{align}
\end{subequations}
These equations can be solved numerically to determine branches of equilibria.
Specifically, solve the quartic equation in~$y_1$ numerically for a given value of~$y_2$. Then~\eqref{individual based _2groups_ss} can be used to compute~$\tau$. 
Hence the equilibrium curve in the $(\tau, y_1)$-plane can be obtained semi-analytically parametrised by~$y_2$ as follows. 
The value of~$y_2$ is varied in an appropriate interval and the values of~$y_1$ are obtained by solving numerically the quartic equation~\eqref{eq:y1y2}. 
The different solutions of this equation will lead to different branches of the bifurcation curve.

\begin{figure}
    \centering
    \includegraphics[scale=0.3]{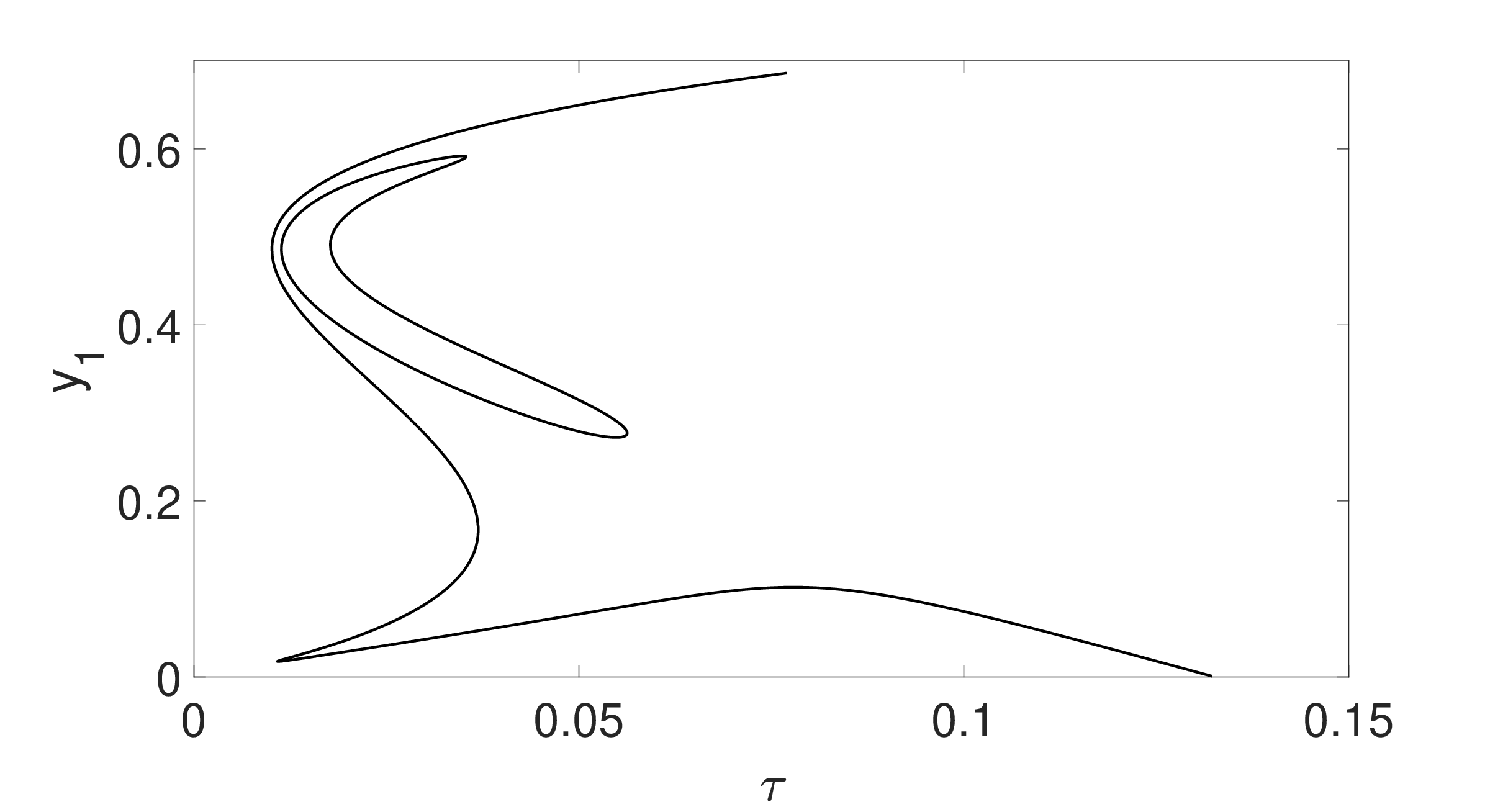}
\includegraphics[scale=0.3]{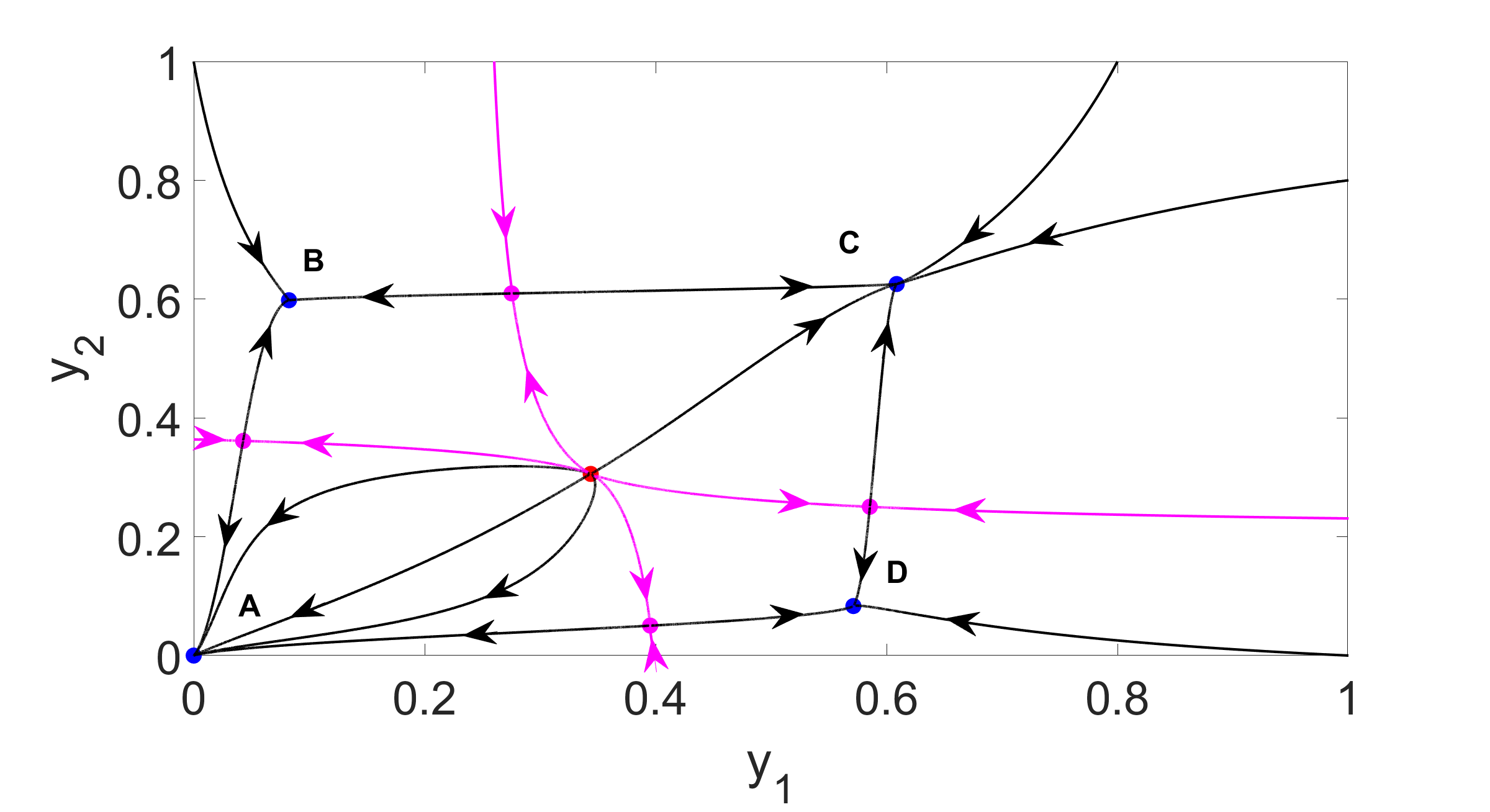}
    \caption{%
    Bifurcations and phase plane for two nonsymmetric populations~\eqref{individual based _2groups} with parameters $d_1=4$, $d_2=3$, $d_3=4$, $d_4=5$, $h_1=1$, $h_2=1.01$, $\gamma=1$, and $\beta=3.85$.
    Top panel: Branches of equilibria curve in the $(\tau, y_1)$ plane. The bifurcation curve consists of two parts, an isola and an unbounded part.
    Bottom panel: Phase plane for fixed $\tau=0.03$ shows the stable equilibria A, B, C, and~D (blue dots), an unstable symmetric equilibrium (red dot), and four additonal (unstable) saddle equilibria (magenta dots). The stable manifolds of the saddle points (magenta curves) separate the basins of attractions of the stable equilibria.}  
\label{fig:non_symmetric_bif_phaseplane}
\end{figure}

To illustrate the dynamics, we used parameter values close to those in Figure~\ref{fig:two_pop_bif_curve}; hence this case can be considered as a perturbation of the symmetric case. 
The bifurcation curve belonging to non-symmetric parameter values is shown in the top panel of Figure~\ref{fig:non_symmetric_bif_phaseplane}. 
The symmetry $d_1=d_4$ and $h_1=h_2$ of the parameter values is broken by choosing $d_1=4$, $d_4=5$ and $h_1=1$, $h_2=1.01$ that can be considered as a small perturbation of the symmetric case.
We can see that the pitchfork curves splits into more generic bifurcations. 
The number of fold points along the branches remains the same as in the symmetric case but an isola appears in the bifurcation diagram. 
One can see from the bifurcation diagram that there is a domain of~$\tau$ values for which there are 8 non-trivial steady states, three of them are stable, four are saddle points and one is unstable with two-dimensional unstable subspace. 
The phase portrait corresponding to $\tau=0.03$ is shown in the bottom panel of Figure~\ref{fig:non_symmetric_bif_phaseplane}. 
This phase portrait is qualitatively equivalent to that shown in the top panel of Figure~\ref{fig:symmetric_phase_time_evol}.

\section{Discussion}\label{sec:discussion}
Reduced mean-field equations give valuable insights into disease spreading on networks that are challenging to obtain from high-dimensional state evolution.
Here we focused on a class of models called individual-based mean-field models which arise from considering a bottom-up approach, starting from the evolution equations for the probability of nodes being infected at time~$t$.
Breaking the dependency on higher-order moments early, results in an $N$-dimensional systems of differential equations which can be simplified to $M$~equations as long as nodes in any of the $M$~populations are topologically equivalent: The populations themselves can differ but equivalence is necessary within each individual population. Note that the insights we obtain go beyond fully homogeneous populations: 
Standard hyperbolicity considerations imply that one would expect similar dynamics if the homogeneity is broken (cf.~Section~\ref{sec:TwoPop}).

The low-dimensional equations we consider are not unique to the reduction approach we used; they also encompass other reduced models proposed in the literature, particularly those involving the application of downward closure.
This highlights that one expects to see global dynamical behavior that is qualitative similar across model equations.
The explicit expressions of how the model parameters relate to the network properties gives explicit insights into how network structure---and in particular the presence of higher-order interactions---affect the dynamics.
Interaction order has a direct influence on the possibility of multistability for one population:
We showed that models with up to three-body interactions can only give rise to either a transcritical transition or bistability and this is in line with findings in~\cite{kuehn2021universal}.
Models with four-body interactions however, show richer behaviour with the previous two possible outcomes being complemented by a multistability regime where two strictly non-zero endemic steady states can co-exist.
Furthermore, we investigated a coupled two-population model with up to three-body interactions only.
Here, we showed that symmetry breaking, a well studied phenomena in population dynamics, yields a possible route to multistability between endemic steady states. It is clear that the order of interaction affects the degree of the polynomial, with higher-order interactions leading to polynomials of higher degree. This in turn leads to richer behaviours, in our case up to four stable states, and this could present opportunities for further investigation.


Our insights serve as a valuable guide as to what qualitative behavior one may expect and where (in parameter space) it may arise.
Of course, these models only serve as an approximation of microscopic models such as individual-based stochastic models whose number of possible states scale exponentially in the number of individuals.
Indeed, given that individual-based mean-field models apply the closure early; at the level of pairs, it means that it misses out some of the important local correlations.
Standard results that relate exact and approximate models are typically over finite time (rather than asymptotics); see, for example,~\cite{wormald1995differential,wormald1999differential,sclosa2024}.
One typically expects that good agreement is possible away from bifurcation points.
In fact, direct simulations show that the agreement between  exact and approximate model can be surprisingly good.
While our mean-field models are thus instructive to understand how network structure shapes the contagion dynamics, a direct comparison with stochastic simulations are beyond the scope of this article.

Complex contagion through higher-order interactions shape the dynamics and make it richer compared to classical transmission models. In this paper, we provide a first step towards a mathematical classification of possible dynamics as well as an understanding of the more subtle interactions between structure and dynamics. However, there is further scope to extend this work to addressing questions around overlap between higher-order structures~\cite{malizia2023hyperedge,malizia2023pair,burgio_triadic_2024} and  adaptivity~\cite{burgio2023adaptive} as well as extension to variations of SIS model by including multiple disease classes leading to models such as SEIS, SIRS or even contact tracing. Note that such extensions should be driven by a well-defined underlying research question, rather than the mere application of existing tools to marginally adjusted models. Finally, we believe that our approach and results underscore the utility of bifurcation theory as a powerful tool for elucidating potential system dynamics and enhancing the understanding of how and why certain outcomes are either promoted or constrained.

\bibliography{project1.bib}
\bibliographystyle{abbrv}


\subsection*{Data statement}
This is a theoretical study and no data was used. Code will be made available upon a reasonable request.

\subsection*{Competing interests}
We declare that we have no competing interests.

\subsection*{Acknowledgment}
 P.L. Simon acknowledges support from the Hungarian Scientific Research Fund, OTKA Grant No. 135241 and from ERC Synergy Grant No. 810115 - DYNASNET.

\end{document}